\newcommand{\kket}[1]{\|{#1}\rangle\!\rangle}
\newcommand{\kkket}[1]{|\hspace{-0.1em}\|{#1}\rangle\!\rangle\!\rangle} 
\DeclareMathOperator{\supp}{supp}
\spnewtheorem{A}{A}{\bf}{\rm}
 \journalname{Manuscript submitted to Foundations of Physics}
\begin{document}

\title{Quantum information vs.\ epistemic logic: An analysis of the Frauchiger-Renner theorem\thanks{Some amount of work for this paper was done during my employment with the research unit The Epistemology of the Large Hadron Collider, funded by the German Research Foundation (DFG; grant FOR 2063). The work was initiated with my visiting scholarship at the Stellenbosch Institute for Advanced Study, kindly hosted by the latter.}
}

\titlerunning{Quantum information vs.\ epistemic logic}        % if too long for running head

\author{Florian J. Boge
}

%\authorrunning{Short form of author list} % if too long for running head

\institute{Florian J. Boge \\[.5em]  Institute for Theoretical Particle Physics and Cosmology\\ 
RWTH Aachen University\\ Sommerfeldstr. 16, 52074 Aachen, Germany \\[.5em] 
Stellenbosch Institute for Advanced Study (STIAS)\\
Wallenberg Research Centre at Stellenbosch University\\
Stellenbosch 7600\\
South Africa\\[.5em] 
fboge@physik.rwth-aachen.de 
}

\date{Received: date / Accepted: date}
% The correct dates will be entered by the editor

\maketitle

\begin{abstract}
A recent no-go theorem \citep[][]{frauchiger} establishes a contradiction from a specific application of quantum theory to a multi-agent setting. The proof of this theorem relies heavily on notions such as `knows' or `is certain that'. This has stimulated an analysis of the theorem by Nurgalieva and del Rio \cite{delrio}, in which they claim that it shows the ``[i]nadequacy of modal logic in quantum settings'' (ibid.). In this paper, we will offer a significantly extended and refined reconstruction of the theorem in multi-agent modal logic. We will then show that a thorough reconstruction of the proof as given by Frauchiger and Renner requires the reflexivity of access relations (system \textsf{\textbf{T}}). However, a stronger theorem is possible that already follows in serial frames, and hence also holds in systems of \emph{doxastic} logic (such as \textsf{\textbf{KD45}}). After proving this, we will discuss the general implications for different interpretations of quantum probabilities as well as several options for dealing with the result.\end{abstract}
\keywords{quantum information \and epistemic logic \and  Wigner's friend \and Frauchiger and Renner}

\section{Introduction}\label{sec:Intro}
Quantum theory (QT) is the most successful theory of modern science \citep[e.g.][]{kleppner}, and yet notoriously defies even a more-or-less agreed-upon interpretation. In the mathematical formalism, to recall, a system's state is modelled by a \emph{density operator} $\rho$, a positive trace class operator with trace 1 on some separable Hilbert space $\mathcal{H}$, i.e., $\ev{\rho}{v}=\braket*{v}{\rho v}\geq 0$ for all $\ket{v}\in\mathcal{H}$, where $\braket{\cdot}$ defines a Hermitian form on $\mathcal{H}$, and $\tr(\rho)=_{df}\sum_{i=1}^{d}\ev{\rho}{v_{i}}=1$ ($\qty{\ket{v_{i}}}_{1\leq i\leq d}$ an orthonormal basis of $\mathcal{H}$, and $\dim \mathcal{H}=d\in\mathbb{R}\cup\qty{\infty}$). 

In case $\rho$ corresponds to a \emph{projector} $\pi_{v}=\dyad*{v}$, it is called \emph{pure} and in turn corresponds to an equivalence class of vectors $\ket*{v}$ from $\mathcal{H}$ with $\ket*{v}\sim\ket*{v}'$ iff (if, and only if) there is a $\phi\in\mathbb{R}$ s.t.\ $\ket*{v'}=e^{\imath \phi}\ket*{v}$ ($\imath=\sqrt{-1}$). $\pi_{v}$ is positive and satisfies $\pi_{v}=\pi_{v}^{\dagger}$ and $\pi_{v}^{2}=\pi_{v}$, where the \emph{adjoint} $X^{\dagger}$ is defined by $\braket*{v}{X^{\dagger}v'}=\braket*{Xv}{v'}$ ($\ket*{Yv}$ the vector resulting from applying operator $Y$ to $\ket*{v}$ and $\bra*{v}$ an element of the dual space $\mathcal{H}^{\ast}$ of $\mathcal{H}$). Non-pure states, also called \emph{mixed}, are convex sums $\rho =\sum_{k}\lambda_k \pi_k$, where the $\lambda_k\geq 0$ satisfy $\sum_k\lambda_k =1$ and can be interpreted as probabilities. 

According to the \emph{Sch\"odinger picture}, the state of the system evolves as $U\rho U^{\dagger}$, were $U$ is a bijective, linear, norm-preserving (\emph{unitary}) map on $\mathcal{H}$ that depends on a time-parameter $t$, and $UU^{\dagger}=U^{\dagger}U$ gives the identity map $\mathbb{I}$ on $\mathcal{H}$. (For $N$ spaces $\mathcal{H}_j$, we let $\mathbb{I}_N$ the identity on $\bigotimes_{j=1}^{N} \mathcal{H}_j$.) The simplest kind of measurement on a system $\mathfrak{s}$ with state $\rho_\mathfrak{s}$ is given by a family of one dimensional orthogonal projectors $\qty{\pi_{j}}_{1\leq j \leq d}$, each corresponding to a one dimensional subspace of $\mathcal{H}$, where orthogonality means $\pi_i\pi_j=\delta_{ij}\mathbb{I}$ ($\delta_{ij}$ the Kronecker symbol). Equivalently, the \emph{Heisenberg picture} has the $\pi_{j}$ evolve over time as $\pi_{j}^{t}=U^{\dagger}_{t}\pi_{j}U_{t}$.

According to \emph{Born's rule}, the probability for obtaining outcome $o$ in a corresponding measurement on $\mathfrak{s}$ is given by $\tr(\rho\pi_{o})$, which simplifies to $\ev{\pi_{o}}{v}=\braket*{v}{o}\!\braket*{o}{v}$ for a pure state ($\rho=\dyad*{v}$). One may define an \emph{observable} $O$ whose values correspond to these outcomes either directly by the set of projectors or by $O=\sum_{o}o\pi_{o}$. The set $\sigma(O)$ of values $o$ of $O$ is also called $O$'s eigenvalue-spectrum, and Born's rule defines a probability measure $\Pr(O=o)=\tr(\rho \pi_o)$ over $\sigma(O)$.

Measurements of this kind are called \emph{projective}, and if one particular outcome results, they are additionally called \emph{selective}. The state update-prescription for this kind of measurement, called the \emph{L\"udes rule}, has $\rho$ evolve as
\begin{equation}\label{eq:lud}
\rho\mapsto\pi_{o}\rho\pi_{o}/\tr(\rho\pi_{o}),
\end{equation}
where $\pi_{o}$ corresponds to the value $o$ of $O$ measured. 

On account of the evolution described in \eqref{eq:lud}, the state of the system measured will become $\pi_{o}$, which corresponds to the \emph{eigenstate} $\ket*{o}$ of $O$, meaning that it satisfies $O\ket*{o}=o\ket*{o}$. However, even assuming $\rho$ to be pure ($\rho=\dyad*{v}$), it could have a (non-trivial) expansion $\ket{v}=\sum_{o}\alpha_{o}\ket{o}$ in terms of $O$'s eigenstates, where $\alpha_{o}\in\mathbb{C}$ for each $o$. 

The L\"uders step $\pi_v\mapsto\pi_o$ is thus generally incompatible with the unitary evolution: Assuming $\ket{\psi}_\mathfrak{s}=\ket{o}_\mathfrak{s}$ for some system system $\mathfrak{s}$ being measured, the joint evolution of $\mathfrak{s}$ and measuring device $\mathfrak{m}$ should, in the simplest case called an \emph{ideal von Neumann measurement}, proceed as $U\ket*{o}_ \mathfrak{s}\ket*{0}_\mathfrak{m}=\ket*{o}_ \mathfrak{s}\ket*{o}_\mathfrak{m}$, where $\ket*{0}_\mathfrak{m}$ is some state in which $\mathfrak{m}$ does not indicate anything particular about $\mathfrak{s}$'s state, and $\ket*{o}_\mathfrak{m}$ one where it indicates $\mathfrak{s}$'s state exactly. However, if the state of $\mathfrak{s}$ is $\ket{v}_\mathfrak{s}=\sum_o\alpha_o\ket*{o}_\mathfrak{s}$, the same unitary will effect the change $U\sum_i\alpha_i\ket*{o}_\mathfrak{s}\ket*{0}_\mathfrak{m}=\sum_o\alpha_o\ket*{o}_\mathfrak{s}\ket*{o}_\mathfrak{m}$. The incompatibility between these two kinds of processes is known as the \emph{quantum measurement problem} \citep[e.g.][]{maudlin1995}.

The implications of the measurement problem are far reaching and well-known, in part on account of thought experiments such as that of \emph{Wigner's friend} \citep[][]{wigner}: Assume that a friend $\mathfrak{f}$ of physicist Eugene Wigner is measuring an observable of some quantum system $\mathfrak{s}$ inside the confines of some laboratory. $\mathfrak{f}$ then finds outcome $o$ and assigns state $\rho=\pi_{o}$ accordingly. Now assume also that Wigner is outside the laboratory and wants to find out the joint state of $\mathfrak{f}$ and $\mathfrak{s}$ by making an appropriate measurement. In a sense, the friend is a physical system: organisms consist of molecules, which consist of atoms, which consist of elementary particles. Hence, the joint system $\mathfrak{sf}$ could be associated with a however complex density operator $\rho_{\mathfrak{sf}}$ that evolves unitarily in time.

From Wigner's point of view, neither should $\mathfrak{s}$ therefore be in a state $\pi_{o}$ at the time the friend chose to measure it, nor should $\mathfrak{f}$ himself be in an eigenstate of most of the observables Wigner could in principle measure on him. To avoid the conclusion that $\mathfrak{f}$ would hence end up ``in a state of suspended animation'', Wigner attributed to ``consciousness[...] a different role in quantum mechanics than [to] the inanimate measuring device[...].'' \citep[293]{wigner} 

Recently, Frauchiger and Renner \cite{frauchiger} have extended the scenario of Wigner's friend, the result being a theorem that purports to demonstrate the inconsistency of single outcome interpretations of QT in which it can be universally applied. The scenario concerns ``multiple agents'' that each ``have access to different pieces of information, and draw conclusions by reasoning about the information held by others'' \citep[6]{frauchiger}, so it falls in the domain of \emph{quantum information theory}, ``the study of the information processing tasks [...] accomplished using quantum mechanical systems.'' \citep[1]{nielsen2010} Reasoning about information in multi-agent scenarios is also the proper subject of \emph{epistemic logic} \citep[e.g.][1]{ditmarsch}. Hence, it should be possible to reconstruct the theorem within epistemic logic---as has been done by Nurgalieva and del Rio \cite{delrio}. 

We here significantly extend and refine Nurgalieva and del Rio's approach to better understand the consequences of Frauchiger and Renner's theorem. We will also show that the theorem follows in epistemic as well as doxastic systems, and discuss implications and options for dealing with them.

\section{Intuitive approach}\label{sec:FRThm}
Frauchiger and Renner \cite[2]{frauchiger} consider the following experimental protocol, involving four agents -- call them Amanda ($\mathfrak{a}$), Chris ($\mathfrak{c}$), David ($\mathfrak{d}$), and Gustavo ($\mathfrak{g}$) -- and two \emph{qbits} (two-state quantum systems) $\mathfrak{r}, \mathfrak{l}$:
\begin{framed}
\textsf{FR-Protocol:}
\begin{itemize}[wide=0pt, leftmargin=15pt, labelwidth=15pt, align=left]
\item[$t=0$:] The initial state of $\mathfrak{r}$ is $\ket*{\text{init}}_\mathfrak{r}=\sqrt{1/3}\ket*{0}_{\mathfrak{r}}+\sqrt{2/3}\ket*{1}_{\mathfrak{r}}$, and that of $\mathfrak{l}$ is $\ket*{0}_{\mathfrak{l}}$. The memories of $\mathfrak{a},\mathfrak{c},\mathfrak{d},\mathfrak{g}$ are each associated with quantum states $\ket*{0}_{j} (j\in\qty{\mathfrak{a},\mathfrak{c},\mathfrak{d},\mathfrak{g}})$.
\item[$t=1$:] $\mathfrak{a}$ performs a measurement $\qty{\pi_{0},\pi_{1}}$ on $\mathfrak{r}$ and her memory is updated to $\ket*{0}_\mathfrak{a}$ or $\ket*{1}_\mathfrak{a}$ respectively, depending on the result. If she measures $0$, she leaves $\mathfrak{l}$ in its initial state; otherwise she prepares it in state $\ket*{+}_\mathfrak{l}=\sqrt{1/2}(\ket*{0}_{\mathfrak{l}}  + \ket*{1}_{\mathfrak{l}})$ by an appropriate procedure. Afterwards, she sends it to $\mathfrak{g}$ in each case.
\item[$t=2$:] $\mathfrak{g}$ performs a measurement $\qty{\pi_{0},\pi_{1}}$ on $\mathfrak{l}$ and his memory is updated accordingly.
\item[$t=3$:] $\mathfrak{c}$ performs a joint measurement $\qty{\pi_{\text{ok}}, \pi_{\text{fail}}}$ on $\mathfrak{r}$ and $\mathfrak{a}$'s memory, where
\begin{align*}
\ket*{\text{ok}}_{\mathfrak{ra}} = \sqrt{1/2}(\ket*{0}_\mathfrak{r}\ket*{0}_\mathfrak{a} - \ket*{1}_\mathfrak{r}\ket*{1}_\mathfrak{a}),\\
\ket*{\text{fail}}_{\mathfrak{ra}} = \sqrt{1/2}(\ket*{0}_\mathfrak{r}\ket*{0}_\mathfrak{a} + \ket*{1}_\mathfrak{r}\ket*{1}_\mathfrak{a}),
\end{align*}
and his memory is updated accordingly.
\item[$t=4$:] $\mathfrak{d}$ performs a joint measurement $\qty{\pi_{\text{ok}}, \pi_{\text{fail}}}$ on $\mathfrak{l}$ and $\mathfrak{g}$'s memory, where
\begin{align*}
\ket*{\text{ok}}_{\mathfrak{lg}} = \sqrt{1/2}(\ket*{0}_\mathfrak{l}\ket*{0}_\mathfrak{g} - \ket*{1}_\mathfrak{l}\ket*{1}_\mathfrak{g}),\\
\ket*{\text{fail}}_{\mathfrak{lg}} = \sqrt{1/2}(\ket*{0}_\mathfrak{l}\ket*{0}_\mathfrak{g} + \ket*{1}_\mathfrak{l}\ket*{1}_\mathfrak{g}),
\end{align*}
and his memory is updated accordingly.
\item[$t=5$:] $\mathfrak{c}$ and $\mathfrak{d}$ compare their outcomes and halt the experiment if they both have measured `ok' on their respective system. Otherwise they set $t=0$ and the protocol repeats. 
\end{itemize} 
\end{framed}

The unitaries relevant for the physical processes involved in the \textsf{FR-Protocol} are given in the \hyperref[sec:append]{Appendix}, \hyperref[thm:OkZero]{A 1}. Allowing that agents can treat other agents as physical systems and describe their co-evolution with other systems unitarily, one obtains
\begin{align*}
 & (\sqrt{1/3}\ket*{0}_{\mathfrak{r}}\ket*{0}_\mathfrak{a}\ket*{0}_\mathfrak{l}+\sqrt{2/3}\ket*{1}_{\mathfrak{r}}\ket*{1}_\mathfrak{a}\sqrt{1/2}(\ket*{0}_\mathfrak{l}+\ket*{1}_\mathfrak{l}))\ket*{0}_\mathfrak{g} \\
=  & (\sqrt{1/3}(\ket*{0}_{\mathfrak{r}}\ket*{0}_\mathfrak{a} +  \ket*{1}_{\mathfrak{r}}\ket*{1}_\mathfrak{a})\ket*{0}_\mathfrak{l}+ \sqrt{1/3}\ket*{1}_{\mathfrak{r}}\ket*{1}_\mathfrak{a}\ket*{1}_\mathfrak{l})\ket*{0}_\mathfrak{g} \\
= &  (\sqrt{2/3}\ket*{\text{fail}}_{\mathfrak{ra}}\ket*{0}_\mathfrak{l}+ \sqrt{1/3}\ket*{1}_{\mathfrak{r}}\ket*{1}_\mathfrak{a}\ket*{1}_\mathfrak{l})\ket*{0}_\mathfrak{g}
\end{align*}
at $t'>t_1>1$, where $t_j$ generically refers to the time in interval $(j, j+1]$ at which the given measurement is completed and $t'$ denotes the time when Amanda has finished her subsequent preparation on $\mathfrak{l}$.\footnote{We consider all measurement `pointlike' and time $t'$ of Amanda's preparation to be `just after' her measurement.} Similarly, after $t_2$, the total state will be 
\begin{align}
& \sqrt{2/3}\ket*{\text{fail}}_{\mathfrak{ra}}\ket*{0}_\mathfrak{l}\ket*{0}_\mathfrak{g}+ \sqrt{1/3}\ket*{1}_{\mathfrak{r}}\ket*{1}_\mathfrak{a}\ket*{1}_\mathfrak{l}\ket*{1}_\mathfrak{g}\label{eq:Chris}\\
= & \sqrt{1/3}\ket*{0}_{\mathfrak{r}}\ket*{0}_\mathfrak{a}\ket*{0}_\mathfrak{l}\ket*{0}_\mathfrak{g}+\sqrt{2/3}\ket*{1}_\mathfrak{r}\ket*{1}_\mathfrak{a}\ket*{\text{fail}}_{\mathfrak{lg}}=:\ket*{\Psi}_{\mathfrak{ralg}}.\label{eq:David}
\end{align}

By Born's rule, 
\begin{equation*}
\Pr(M_{\mathfrak{c}}^{t_3}=\text{ok}\wedge M_{\mathfrak{d}}^{t_4}=\text{ok})={}_{\mathfrak{ralg}}\!\bra{\Psi}\pi_{\text{ok}}\otimes\pi_{\text{ok}}\ket*{\Psi}_{\mathfrak{ralg}}=1/12,
\end{equation*}
so QT renders it expected that the experiment may come to a halt at some point. Assuming that this has happened, what could David infer from QT? 

To see how his conclusion leads to a conflict, consider the following reasoning principles invoked by \cite[4--5]{frauchiger}: 
\begin{itemize}
\item[\textsf{Q}] Suppose agent $\mathfrak{x}$ has established: (i) System $\mathfrak{s}$ is in state $\ket*{v}_{\mathfrak{s}}$ at $t_{0}$. Suppose furthermore that $\mathfrak{x}$ knows: (ii) A value for $O$ is obtained by a measurement of the projectors $\qty{\pi_{o}^{t_{0}}}_{o\in\sigma(O)}$ on $\mathfrak{s}$ at $t_{0}$, completed at $t$. If $_\mathfrak{s}\!\ev{\pi_{\tilde{o}}^{t_{0}}}{v}_\mathfrak{s}=1$ for some $\tilde{o}\in\sigma(O)$, then $\mathfrak{x}$ can conclude: (iii) I am certain that $O=\tilde{o}$ at $t$.
\end{itemize}

\begin{itemize}
\item[\textsf{C}] Suppose agent $\mathfrak{x}$ has established: (i) I am certain that agent $\mathfrak{y}$, upon reasoning with the same theory as I, is certain that  $O=\tilde{o}$ at $t$. Then $\mathfrak{x}$ can conclude: (ii) I am certain that $O=\tilde{o}$ at $t$. \end{itemize}

\begin{itemize}
\item[\textsf{S}] Suppose agent $\mathfrak{x}$ has established: (i) I am certain that $O=\tilde{o}$ at $t$. Then $\mathfrak{x}$ must deny: (ii) I am certain that $O\neq\tilde{o}$ at $t$.\end{itemize}

\textsf{S} may be understood as expressing a `single world' view of QT, as in a many worlds-view, $O$ may be regarded as simultaneously taking on multiple, non-identical values, confined (after decoherence) to more or less independent `worlds' or `branches'. \textsf{Q} establishes a semantic link between quantum probabilities and value statements that may be regarded certain. \textsf{C}, finally, means a removal of iterations of `certainty', regardless of whose agent's certainties they are.

Finally, it is an additional assumption \citep[cf.][272]{delrio} that observables defined over joint spaces can be evolved unitarily:\footnote{Frauchiger and Renner appeal to isometries, linear and norm-preserving but not necessarily bijective maps. Like Nurgalieva and del Rio, however, we assume unitary extensions without loss of generality.}
\begin{itemize}
\item[\textsf{U}] Suppose agent $\mathfrak{x}$ has established: (i) System $\mathfrak{s}\neq\mathfrak{x}$ is in state $\ket*{v}_{\mathfrak{s}}$ at time $t_{0}$. Suppose furthermore that $\mathfrak{x}$ knows: (ii) A value for $O$ is obtained by a measurement of the projectors $\qty{\pi_{o}}_{o\in\sigma(O)}$ on $\mathfrak{s}$ at $t_{1}>t_0$. Then there is a unitary operator $U$ that $\mathfrak{x}$ can use to establish $_\mathfrak{s}\!\ev{\pi_{\tilde{o}}^{t_{1}}}{v}_\mathfrak{s}={}_\mathfrak{s}\!\ev{U^{\dagger}\pi_{\tilde{o}}U}{v}_\mathfrak{s}$. \end{itemize}

Note the stipulation $\mathfrak{s}\neq\mathfrak{x}$. If this was absent, a contradiction would follow trivially, as Amanda could model her own state in contradictory ways.

Suppose now that Amanda is interested Gustavo's lab and has measured value 1 at $t_1$ ($M_\mathfrak{a}^{t_1} = 1$) and prepared system $\mathfrak{l}$ in $\ket*{+}_{\mathfrak{l}}$ at $t'>t_1$ accordingly. Using \textsf{U}, she can establish the joint state $\ket*{\text{fail}}_\mathfrak{lg}$ at $t_2$. Then, using her background knowledge of the experimental setup as well as the fact that $_\mathfrak{lg}\!\bra*{\text{fail}}\pi_{\text{fail}}\ket*{\text{fail}}_\mathfrak{lg}=1$, she can infer from \textsf{Q}:
\begin{itemize}
\item[$\alpha$]`I am certain that $M_\mathfrak{d}^{t_4}=$ fail.'
\end{itemize} 
Assume also that Gustavo measures $M_\mathfrak{g}^{t_2}=1$. Using \textsf{Q}, Gustavo may reason as follows: had the state of the qbit he received been $\ket*{0}_\mathfrak{l}$, he could have been certain that $M_\mathfrak{g}^{t_1}=0$, as $_\mathfrak{l}\!\ev{\pi_{0}}{0}_\mathfrak{l}=1$. So the state must have been $\ket*{+}_{\mathfrak{l}}$. Gustavo knows that Amanda has prepared the state accordingly, so he can reconstruct her reasoning by appeal to QT to conclude:
\begin{itemize}
\item[$\tilde{\gamma}$]`I am certain that $\mathfrak{a}$ is certain that $M_\mathfrak{d}^{t_4}=$ fail.'
\end{itemize} 
Or, using \textsf{C}: 
\begin{itemize}
\item[$\gamma$]`I am certain that $M_\mathfrak{d}^{t_4}=$ fail.'
\end{itemize}
Consider now the operator 
\begin{equation*}
\Pi_{\neg(\text{ok}_\mathfrak{c}\wedge 0_\mathfrak{g})}:=\mathbb{I}_4-\Pi_{\text{ok}}^{t_3}\Pi_{0}^{t_2},
\end{equation*} 
where
\begin{equation}\label{eq:PiokPi0}
\Pi_{0}^{t_2}=U^\dagger(\mathbb{I}_2\otimes\pi_{0}\otimes\mathbb{I})U, \ \  \Pi_{\text{ok}}^{t_3}=\bar{U}^\dagger(\pi_\text{ok}\otimes\mathbb{I}_2)\bar{U}.
\end{equation}
Here, $U=U_{t'}U_{t_1}$ evolves operators until $t= 2$ and $\bar{U}=U_{t_2} U$ until $t=3$, with the $U_{i}$ as given in the \hyperref[sec:append]{Appendix}, \hyperref[thm:OkZero]{A 1}. As shown there, this yields
\begin{equation}
{}_{\mathfrak{ralg}}\!\bra{\text{init}}\Pi_{\neg(\text{ok}_\mathfrak{c}\wedge 0_\mathfrak{g})}\ket*{\text{init}}_\mathfrak{ralg}=1 -{}_{\mathfrak{ralg}}\!\bra{\text{init}}\Pi_{\text{ok}}^{t_3}\Pi_{0}^{t_2}\ket*{\text{init}}_\mathfrak{ralg}=1-0=1,
\end{equation} 
with $\ket*{\text{init}}_\mathfrak{ralg}:= \ket*{\text{init}}_\mathfrak{r}\bigotimes_{j=\mathfrak{a},\mathfrak{l},\mathfrak{g}}\ket*{0}_{j}$.

Using \textsf{Q} and \textsf{U}, Chris can hence be certain that $M_\mathfrak{c}^{t_3}\neq$ ok $\vee M_{\mathfrak{g}}^{t_2}\neq 0$. However,  $M_{\mathfrak{c}}^{t_3}=$ ok, so under the assumption that $M_{\mathfrak{g}}^{t_2}=0$, Chris could be certain that $\neg(M_\mathfrak{c}^{t_3}\neq$ ok $\vee M_{\mathfrak{g}}^{t_2}\neq 0)$ which contradicts \textsf{S}. Hence Chris can infer $M_{\mathfrak{g}}^{t_2}=1$. Building on this, he can reconstruct Gustavo's reasoning to arrive at: 
\begin{itemize}
\item[$\chi$]`I am certain that $M_\mathfrak{d}^{t_4}=$ fail.'
\end{itemize}
Finally, Chris tells David his result. Following through Chris' chain of reasoning, David can infer that Gustavo must have received $\ket*{+}_\mathfrak{l}$ and Amanda must have measured $1$ at $t_1$. From \eqref{eq:David} we read off that David can then also infer:
\begin{itemize}
\item[$\delta$]`I am certain that $M_\mathfrak{d}^{t_4}=$ fail.''
\end{itemize}
However, indeed $M_\mathfrak{d}^{t_4}=$ ok, and David is certain of that. Hence, by \textsf{S}, David can and cannot be certain that $M_\mathfrak{d}^{t_4}=$ ok. Contradiction.

\section{Probabilistic certainty and systems of epistemic and doxastic logic}\label{sec:S5}
The above proof had a crucial reliance on notions such as `is certain' or `knows', and the exact lines of reasoning by appeal to these notions are, in some instances, not perfectly clear. The rules for reasoning about such notions are the subject matter of epistemic logic, as has been noted by Nurgalieva and del Rio, leading them to claim the ``[i]nadequacy of modal logic in quantum settings''.

This is a bold claim. However, it seems plausible that the above argument does constitute a puzzle when reconstructed within a suitable modal system. What, however, is the puzzle \emph{exactly} and what are possible solutions?

Let us first recall some fundamentals. Let  $\mathscr{L}$ a propositional language over a set of elementary propositions $p,q,\ldots$, subject to the usual recursive rules for $\vee, \wedge, \neg$. `$\supset$' denotes material implication and is defined as $\neg\varphi\vee\psi$; `$\equiv$' denotes material equivalence and is defined as $(\varphi\supset\psi)\wedge(\psi\supset\varphi)$. Adding for each agent $\mathfrak{x}$ from some set $A$ a modal operator $\square_\mathfrak{x}$ and defining $\Diamond_\mathfrak{x}\varphi\equiv_{df} \neg \square_\mathfrak{x} \neg\varphi$ for each $\mathfrak{x}$, one obtains an enriched language $\mathscr{L}_{\square}$ in which for every well-formed formula $\varphi$ from $\mathscr{L}$ $\square_\mathfrak{x}\varphi$ is well-formed as well. 

A \emph{frame} is an ordered pair $\mathcal{F}_A=\ev{W, R}$, where $R:A\rightarrow \mathcal{P}(W^{2})$ yields a binary relation $R_\mathfrak{x}\subseteq  W\times W$ over $W$ for every $\mathfrak{x}\in A$. Intuitively, $w R_\mathfrak{x}w'$ specifies the worlds $w'$ that are in some sense epistemically possible for $\mathfrak{x}$ w.r.t.\ $w$, so that the operators may ``unpack the account of knowledge encoded in the frame by $R$.'' \citep[974]{williamson2014}. So $w R_\mathfrak{x}w'$ means that $\mathfrak{x}$ holds it possible that she is in $w'$ if she is in $w$. The converse might not be true, i.e., $R_\mathfrak{x}$ need not be assumed symmetric. 

A \emph{model} is an ordered pair $\mathcal{M}_{P}=\ev{\mathcal{F}_A, V}$ defined some set $P$ of elementary propositions, where $V: P \rightarrow \mathcal{P}(W)$ is a \emph{valuation function} that associates with every proposition $p\in P$ the worlds $w\in W$ in which it is true. (We will suppress indices $A$ and $P$ for simplicity below.)

Finally, a \emph{pointed model} may be defined as a pair $\hat{\mathcal{M}}=\ev{\mathcal{M},\hat{w}}$, where $\hat{w}\in W$. In epistemic logic(s), truth of formulas $\xi$ may then be defined w.r.t.\ a pointed model $\hat{\mathcal{M}}$ as follows: 
\begin{itemize}
\item[]$\hat{\mathcal{M}}\models p$ iff $\hat{w}\in V(p)$,
\item[]$\hat{\mathcal{M}}\models \varphi\wedge\psi$ iff $\hat{\mathcal{M}}\models  \varphi$ and $\hat{\mathcal{M}}\models  \psi$,  
\item[]$\hat{\mathcal{M}}\models \neg\varphi$ iff $\hat{\mathcal{M}}\not\models  \varphi$,  
\item[]$\hat{\mathcal{M}}\models\square_\mathfrak{x} \varphi$ iff for all $w$ s.t.\ $\hat{w}R_\mathfrak{x}w$, $\ev{\mathcal{M}, w}\models\varphi$.
\end{itemize}

We may think of $\hat{w}$ as the \emph{actual world}, w.r.t\ which the epistemic conditions of $\mathfrak{x}$ are evaluated. If some formula $\varphi$ holds in all pointed versions $\ev*{\mathcal{M}, w}$ of $\mathcal{M}$, we may write $\mathcal{M}\models \varphi$.

As is well known, there are different axiom systems which allow to assign different \emph{meanings} to the operators $\square_\mathfrak{x}$ and $\Diamond_\mathfrak{x}$. The weakest system is \textsf{\textbf{K}},\footnote{Actually one obtains a system \textsf{\textbf{K}}$_n$ for $|A|=n$, not identical to \textsf{\textbf{K}} with only one primitive modal operator for $n>1$. The same holds for all systems discussed below; but we generally omit the index and allow loose talk of  `system \textbf{\textsf{X}}' instead of \textbf{\textsf{X}}$_n$.} defined, w.r.t.\ all agents $\mathfrak{x}$ in some set $A$, by the following axiom schemata and transformation rules: 
\begin{itemize}[align=left, itemindent=2.5em]
\item[\textsf{PC}]If $\varphi$ is a propositional tautology, then $\varphi$ is a theorem.
\item[\textsf{K}]$\square_\mathfrak{x}(\varphi\supset\psi)\supset(\square_\mathfrak{x}\varphi\supset\square_\mathfrak{x}\psi$).
\item[\textsf{MP}]From $\varphi$ and $\varphi\supset\psi$, infer $\psi$.
\item[\textsf{N}]From $\varphi$ infer $\square_\mathfrak{x}\varphi$.
\end{itemize}
The propositional tautologies in \textsf{PC} may include modal operators, as long as they are valid in virtue of their propositional form (as in $\square_\mathfrak{x}p\vee\neg\square_\mathfrak{x}p$). Moreover, the `necessitation rule', \textsf{N}, must be treated with caution: it only be applies to theorems, not to premises or assumptions, for otherwise it produces contradictions and other undesirable results \citep[e.g.][29]{ditmarsch}.

While \textsf{\textbf{K}} is ``a minimal modal logic'', as it ``happens to capture the validities of the [...] class of \emph{all} Kripke models'' \citep[26; orig.\ emph.]{ditmarsch}, it is known to be problematic when $\square_\mathfrak{x}\varphi$ is interpreted as `$\mathfrak{x}$ knows that $\varphi$'. \textsf{N}, for instance, expresses logical omniscience: any propositional tautology, regardless of its complexity, is \emph{known} by all of the agents in the set $A$. Similarly, \textsf{K}, together with \textsf{MP}, delivers the \emph{closure} of knowledge under known implication; a principle long suspected to be at the heart of arguments for radical skepticism \citep[][]{nozick1981}.

Similar considerations apply to Bayesian approaches to \emph{probability} though: Applying the Kolmogorov axioms as a formal model of agents' beliefs requires logical omniscience. This has not impaired the popularity of Bayesianism in present-day philosophy. Typically, such considerations are dealt with in terms of \emph{idealization}, meaning that Bayesian probability models are understood as idealized (normative) models of perfectly rational, logically omniscient agents. The same is true about epistemic logic: It's use is to investigate ``\emph{idealised} notions of knowledge, that do not necessarily hold for human beings.'' \citep[ibid.; orig.\ emph.]{ditmarsch} 

The following three additional axioms are typically considered in epistemic logic:
\begin{itemize}[align=left, itemindent=2.5em]
\item[\textsf{T}]$\square_\mathfrak{x}\varphi\supset\varphi$.
\item[\textsf{4}]$\square_\mathfrak{x}\varphi\supset\square_\mathfrak{x}\square_\mathfrak{x}\varphi$.
\item[\textsf{5}]$\neg\square_\mathfrak{x}\varphi\supset\square_\mathfrak{x}\neg\square_\mathfrak{x}\varphi$.
\end{itemize}
\textsf{\textbf{K}}+\textsf{T}+\textsf{4}+\textsf{5} is the system \textsf{\textbf{S5}}. With appropriate interpretation of the operators $\square_\mathfrak{x}$, the axioms of \textsf{\textbf{S5}} are sometimes referred to as the `axioms of knowledge' \citep{ditmarsch, delrio}. It requires the accessibility relation to be an \emph{equivalence relation}, in the sense that any formula that is valid in a frame $\mathcal{F}$ wherein every $R_\mathfrak{x}$ is an equivalence relation is a theorem of \textsf{\textbf{S5}} \citep[][61]{hughes}. To recall, $R_\mathfrak{x}$ is an equivalence relation over $W(\ni w,w',w'')$ if it is reflexive: $wR_\mathfrak{x}w$; transitive: if $wR_\mathfrak{x}w'$ and $w'R_\mathfrak{x}w''$, then $wR_\mathfrak{x}w''$; and symmetric: $wR_\mathfrak{x}w'$ iff $w'R_\mathfrak{x}w$. 

Axiom \textsf{T} establishes knowledge as a \emph{factive} notion: whatever is known must be \emph{true}. This excludes, e.g., usage of `knowledge' wherein a set of background beliefs merely \emph{taken} to be established as true is meant \citep[][307]{halpern1991}. \textsf{T} therefore delineates the appropriate \emph{target} of epistemic logic as the (narrower) concept that epistemologists typically aim to explicate. 

Axioms \textsf{4} and \textsf{5} are less innocent: \textsf{4} is the KK-principle, that one knows what one knows, not respected by externalist theories of knowledge. \textsf{5} is a corresponding negative version that could be questioned on the same grounds. Hence \textsf{\textbf{S5}} deals with an idealized \emph{internalist} conception of knowledge.

These complications are avoided if one assumes what is ``usually considered to be [...] the weakest system of epistemic interest'' \citep[Sect.\ 1]{hendricks}, namely \textbf{\textsf{T}}, which is \textsf{\textbf{K}}+\textsf{T}. \textbf{\textsf{T}} is sound and complete w.r.t.\ the class of all reflexive frames \citep[cf.][28]{ditmarsch}, where a frame $\mathcal{F}$ is said to be reflexive if $R$ yields a reflexive accessibility relation for all $\mathfrak{x}\in A$. (We may equally call (pointed) models reflexive etc.\ below.)

We have seen that Frauchiger and Renner sometimes talk about agents' knowledge, sometimes about their being \emph{certain}. In the latter case, we may interpret probability-1 assertions, as they arise from the quantum calculus, as expressions of maximal \emph{credences}, rather than of knowledge. Indeed, there are several reasons for considering quantum probabilities, even those equal to unity, expressions of subjective credences rather than unequivocal indicators of objective events \citep[e.g.][]{CFS2007}. 

Now knowledge may be factive, but certainty is generally not. It is perfectly coherent to say: `I was absolutely certain that $\varphi$, but it was wrong nevertheless'. Yet Gustavo inferred that he could be certain of an event in case Amanda was certain of it, which is erroneous if Amanda is \emph{mistaken}. Hence, if we take Frauchiger and Renner's appeals to certainty seriously, the theorem might be blocked.

Therefore, let us consider also system \textsf{\textbf{KD45}}, usually regarded as an appropriate logic for \emph{belief} or \emph{certainty} (\cite[39]{ditmarsch}; \cite{halpern1991}), which results from \textsf{\textbf{S5}} when \textsf{T} is replaced by: 
\begin{itemize}[align=left, itemindent=2.5em]
\item[\textsf{D}]$\square_\mathfrak{x}\varphi\supset\Diamond_\mathfrak{x}\varphi$
\end{itemize}
In \textsf{\textbf{KD45}}, $\square_\mathfrak{x}\varphi$ is understood as `$\mathfrak{x}$ is certain that $\varphi$', and \textsf{D} essentially tells us that what is firmly believed by $\mathfrak{x}$ is not regarded impossible by her. Note also that, while each $R_\mathfrak{x}$ is required to be reflexive in \textsf{\textbf{T}}, the validities of \textsf{\textbf{KD45}} are captured by frames that are serial: for all $w\in W$ there is a $w'$ such that $wR_\mathfrak{x}w'$; transitive; and Euclidean: for all $w,w', w''\in W$, if $wR_\mathfrak{x}w'$ and $wR_\mathfrak{x}w''$ then $w'R_\mathfrak{x}w''$. Seriality hence means that something is conceivable for every agent from every world, transitivity that what is conceivable for an agent from one world will also be conceivable for her from worlds from which this world is conceivable, and Eucildeanness that two conceivable worlds are mutually conceivable.  

Locutions of knowledge and certainty were directly connected to probability-1 claims in the proof, so we also need to establish a link between the two. Indeed, such a link has been established by Halpern \cite[][]{halpern1991}, in terms of \emph{probability structures} $\mathcal{N}=\ev{W, V, p_\mathfrak{x}}_{\mathfrak{x}\in A}$, where each $p_\mathfrak{x}$ delivers a probability measure over an algebra $\mathcal{A}\subseteq\mathcal{P}(W)$ and therefore satisfies $p_\mathfrak{x}(\qty{w})\geq0$, for all $w\in W$, and $\sum_{w\in W}p_\mathfrak{x}(\qty{w})=1$.\footnote{This restricts the $p_\mathfrak{x}$ to countable $W$, which is fully sufficient for our case. Cf.\ \cite[314]{halpern1991} for remarks on generalizations to uncountable spaces.}

Furthermore, let $\mathcal{N}$ some probability structure, and $W_{\varphi}=\qty{w\in W| \ev{\mathcal{N}, w}\models\varphi }$. Then, setting for any pointed probability structure $\hat{\mathcal{N}}=\ev{\mathcal{N}, \hat{w}}$
\begin{equation}\label{eq:semHalp}
\hat{\mathcal{N}}\models\square_{\mathfrak{x}}\varphi\text{ iff }p_{\mathfrak{x}}(W_\varphi)=1\tag{$\dagger$},
\end{equation}
one may prove the following theorem:
\begin{theorem}[\cite{halpern1991}]\emph{\textsf{\textbf{KD45}}} is complete and sound w.r.t.\ the class $\mathscr{N}_0$ of all probability structures.
\end{theorem}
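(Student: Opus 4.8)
The plan is to treat soundness and completeness in turn. For soundness one checks that each axiom schema of \textsf{\textbf{KD45}} is valid on every probability structure in $\mathscr{N}_0$ and that \textsf{MP} and \textsf{N} preserve validity. For completeness one takes a \textsf{\textbf{KD45}}-consistent formula, invokes the Kripke-completeness of \textsf{\textbf{KD45}} already recalled above --- validity on the serial, transitive, Euclidean frames --- to obtain a finite Kripke model satisfying it, and then converts that model into a probability structure in which clause \eqref{eq:semHalp} reproduces the relational clause for $\square_\mathfrak{x}$ at every world.

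For soundness, note first that measurability never obstructs: on the countable $W$ at issue every $W_\varphi$ lies in the domain of each $p_\mathfrak{x}$, since the latter is determined by its point masses. \textsf{PC} and \textsf{MP} are as in propositional logic, and \textsf{N} preserves validity because a valid $\varphi$ has $W_\varphi=W$ and $p_\mathfrak{x}(W)=1$. For \textsf{K} one uses $W_{\varphi\supset\psi}\cap W_\varphi\subseteq W_\psi$ together with the fact that the intersection of two sets of $p_\mathfrak{x}$-measure $1$ again has measure $1$ (both complements are null), so monotonicity forces $p_\mathfrak{x}(W_\psi)=1$. \textsf{D} holds because $p_\mathfrak{x}$ has total mass $1$: from $p_\mathfrak{x}(W_\varphi)=1$ one gets $p_\mathfrak{x}(W_{\neg\varphi})=1-p_\mathfrak{x}(W_\varphi)=0\neq1$, i.e.\ $\Diamond_\mathfrak{x}\varphi$. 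For \textsf{4} and \textsf{5} one uses the feature of the probability assignment that plays the part of transitivity and Euclideanness in the relational picture: the measure in force for $\mathfrak{x}$ at a world $w$ coincides with the one in force for her at every world to which that measure gives positive weight (``state-determinedness'' on its own support, which in the multi-agent case calls for the measures to be allowed to depend on the point of evaluation). Granting this, if $p_\mathfrak{x}(W_\psi)=1$ at $w$ then the same holds throughout that support, so the support lies in $W_{\square_\mathfrak{x}\psi}$ and hence $p_\mathfrak{x}(W_{\square_\mathfrak{x}\psi})=1$ --- that is \textsf{4}; reading `$<1$' for `$=1$' throughout gives \textsf{5}.

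For completeness, let $\varphi$ be \textsf{\textbf{KD45}}-consistent. By the Kripke-completeness of \textsf{\textbf{KD45}} with respect to serial, transitive, Euclidean frames, together with the finite model property (a standard filtration argument), there is a finite pointed Kripke model $\ev{\mathcal{M},\hat w}$ with $\mathcal{M}=\ev{W,\qty{R_\mathfrak{x}},V}$, each $R_\mathfrak{x}$ serial, transitive and Euclidean, and $\ev{\mathcal{M},\hat w}\models\varphi$. Build a probability structure $\mathcal{N}$ on the same $W$ and $V$ by assigning, for each $\mathfrak{x}$ and each $w$, the uniform distribution on the finite set $R_\mathfrak{x}(w)=\qty{w'\mid wR_\mathfrak{x}w'}$: seriality makes it nonempty, and, as one checks directly from transitivity and Euclideanness, $R_\mathfrak{x}(\cdot)$ is constant on $R_\mathfrak{x}(w)$, which is precisely the state-determinedness used for \textsf{4} and \textsf{5}, so $\mathcal{N}\in\mathscr{N}_0$. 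A routine induction on the construction of $\psi$ then yields the truth lemma $\ev{\mathcal{N},w}\models\psi$ iff $\ev{\mathcal{M},w}\models\psi$ for every $w$; the only case with content is $\psi=\square_\mathfrak{x}\chi$, where by \eqref{eq:semHalp} the left-hand side says that the uniform measure on $R_\mathfrak{x}(w)$ assigns $W_\chi$ probability $1$, and --- the uniform measure having full support $R_\mathfrak{x}(w)$, and $W_\chi$ being the same set in $\mathcal{N}$ and $\mathcal{M}$ by the induction hypothesis --- this holds iff $R_\mathfrak{x}(w)\subseteq W_\chi$, i.e.\ iff $\ev{\mathcal{M},w}\models\square_\mathfrak{x}\chi$. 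Taking $w=\hat w$ shows $\varphi$ is satisfied in $\mathcal{N}$, so $\varphi$ is satisfiable in $\mathscr{N}_0$.

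The step I expect to carry the real weight is this modal clause of the truth lemma, hand in hand with the soundness of \textsf{4} and \textsf{5}: probability $1$ is strictly coarser than ``true at every accessible world'', so making the two semantics coincide forces the measures to have full support on exactly the accessibility classes and to be state-determined there. The genuine content of the theorem is accordingly that $\mathscr{N}_0$ has been delimited so as to be neither too wide --- wholly unconstrained world-indexed measures would already refute \textsf{4} and \textsf{5} --- nor too narrow to catch every \textsf{\textbf{KD45}}-consistent formula. A small but real technical wrinkle is that the plain canonical model for \eqref{eq:semHalp} may carry uncountable accessibility classes, on which no probability measure of total mass $1$ supported on countably many points exists; the filtration step above disposes of this.
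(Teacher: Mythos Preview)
First, note that the paper does not itself prove this theorem; it is merely cited from Halpern. So there is no proof in the paper against which to compare yours.

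That said, your argument has a genuine gap. As defined just above the theorem, a probability structure in $\mathscr{N}_0$ assigns to each agent $\mathfrak{x}$ a \emph{single} measure $p_\mathfrak{x}$ on $W$, independent of the point of evaluation; the world-indexed measures $p_\mathfrak{x}^w$ belong to the \emph{generalized} probability structures introduced only afterwards. Your completeness construction, however, sets the measure for $\mathfrak{x}$ at $w$ to be uniform on $R_\mathfrak{x}(w)$, which is world-dependent whenever the model contains more than one $R_\mathfrak{x}$-cluster. You observe correctly that $R_\mathfrak{x}(\cdot)$ is constant \emph{on} each cluster, but distinct worlds can point into distinct clusters, so the measures genuinely vary with $w$ and the resulting structure is not in $\mathscr{N}_0$. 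Your own parenthetical---that the multi-agent case ``calls for the measures to be allowed to depend on the point of evaluation''---is effectively an admission that you have changed the target class.

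The same confusion surfaces in your soundness argument for \textsf{4} and \textsf{5}. With a world-independent $p_\mathfrak{x}$, clause \eqref{eq:semHalp} makes the truth of $\square_\mathfrak{x}\psi$ independent of the evaluation world: either $p_\mathfrak{x}(W_\psi)=1$ or not, globally. Hence $W_{\square_\mathfrak{x}\psi}$ is either $W$ or $\varnothing$, and \textsf{4} and \textsf{5} follow at once; no appeal to ``state-determinedness on the support'' is needed or even well-posed here. In the single-agent case your completeness step can be repaired by passing to the submodel generated by the satisfying world $\hat{w}$, where $R(\cdot)$ is globally constant (equal to $R(\hat{w})$), so a single measure suffices; the multi-agent case requires more care, and that is where Halpern's own argument carries the load.
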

\eqref{eq:semHalp} establishes a semantic connection between probability statements and expressions of subjective certainty. The theorem then establishes \textsf{\textbf{KD45}} as the appropriate logic for reasoning about this probabilistic notion of subjective certainty. Note also that \eqref{eq:semHalp} is equivalent to the following:
\begin{equation}\label{eq:semHalp'}
\hat{\mathcal{N}}\models\square_{\mathfrak{x}}\varphi\text{ iff } \ev{\mathcal{N},w'}\models\varphi, \forall w'\in W \text{ s.t. }p_{\mathfrak{x}}(\qty{w'})>0 \tag{$\dagger$'},
\end{equation}
\begin{proof}

Direction `$\Rightarrow$': Assume $p_\mathfrak{x}(W_\varphi)=1$. By normalization, we have $p_{\mathfrak{x}}(W)=1$, so decomposing $W$ into the disjoint subsets $W_\varphi$ and $W\setminus W_\varphi$, we have $p_{\mathfrak{x}}(W \setminus W_\varphi)=0$. Now decompose $W \setminus W_\varphi$ into a disjoint union of singletons $\qty{w}\subseteq W \setminus W_\varphi$. Then by additivity, $\sum_{w\in W\setminus W_\varphi}p_{\mathfrak{x}}(\qty{w})=0$, so by positivity, $p_{\mathfrak{x}}(\qty{w})=0$ for each of these singletons. Hence $w\in W \setminus W_\varphi$ implies $p_{\mathfrak{x}}(\qty{w})=0$, which, by contraposition and the definition of $W_\varphi$, establishes the desired result.

Direction `$\Leftarrow$': For arbitrary $w'\in W$, $p_{\mathfrak{x}}(\qty{w'})>0$ implies $\ev{\mathcal{N},w'}\models\varphi$, whence by definition of $W_\varphi,  w'\in W_\varphi$. Hence for any $w''\in W\setminus W_\varphi$ we have $p_{\mathfrak{x}}(\qty{w''})=0$ by modus tollens. Since $W\setminus W_\varphi$ is the disjoint union of these singletons and $W$ the disjoint union of $W_\varphi$ and $W\setminus W_\varphi$, we have $1=p_{\mathfrak{x}}(W)=p_{\mathfrak{x}}(W_\varphi) + p_{\mathfrak{x}}(W\setminus W_\varphi)=p_{\mathfrak{x}}(W_\varphi)$. \hfill $\blacksquare$
\end{proof}

Moreover, given some probability structure $\mathcal{N}$, let, for any $\mathfrak{x}\in A$, $F_\mathfrak{x}^{\mathcal{N}}=\lbrace w\in W| \exists\varphi: \ev{\mathcal{N}, w}\models\neg\varphi \wedge\square_\mathfrak{x}\varphi \rbrace$ the set of $\mathfrak{x}$'s false beliefs. One can then immediately also prove the following: 
\begin{theorem}[\cite{halpern1991}]Given some $\mathfrak{x}\in A, \mathcal{N}\in\mathscr{N}_{0}$. Then $p_{\mathfrak{x}}(F_\mathfrak{x}^{\mathcal{N}})=0$.
\end{theorem}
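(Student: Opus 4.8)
The plan is to reduce the statement to a pointwise fact — that every world exhibiting a false belief of $\mathfrak{x}$ carries zero $p_\mathfrak{x}$-mass — and then to recover the measure of the whole set $F_\mathfrak{x}^{\mathcal{N}}$ by (countable) additivity. So first I would fix an arbitrary $w\in F_\mathfrak{x}^{\mathcal{N}}$ and, by definition of this set, a witnessing formula $\varphi$ with $\ev{\mathcal{N},w}\models\neg\varphi$ and $\ev{\mathcal{N},w}\models\square_\mathfrak{x}\varphi$. Applying the semantic clause \eqref{eq:semHalp'} — just established to be equivalent to \eqref{eq:semHalp} — to the second conjunct, we get $\ev{\mathcal{N},w'}\models\varphi$ for every $w'\in W$ with $p_\mathfrak{x}(\qty{w'})>0$. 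Contraposing this implication (and using $p_\mathfrak{x}(\qty{w'})\ge 0$), any $w'$ with $\ev{\mathcal{N},w'}\not\models\varphi$ satisfies $p_\mathfrak{x}(\qty{w'})=0$; instantiating at $w'=w$, which satisfies $\ev{\mathcal{N},w}\not\models\varphi$ since $\ev{\mathcal{N},w}\models\neg\varphi$, yields $p_\mathfrak{x}(\qty{w})=0$. As $w$ was arbitrary, $p_\mathfrak{x}(\qty{w})=0$ for every $w\in F_\mathfrak{x}^{\mathcal{N}}$.

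The second and final step is to pass from the singletons back to the set. Since $W$ is countable (cf.\ the footnote accompanying the definition of probability structures), $F_\mathfrak{x}^{\mathcal{N}}\subseteq W$ is at most countable and is the disjoint union of the singletons $\qty{w}$ with $w\in F_\mathfrak{x}^{\mathcal{N}}$; each such singleton lies in the algebra on which $p_\mathfrak{x}$ is defined, hence so does their countable union $F_\mathfrak{x}^{\mathcal{N}}$. By ($\sigma$-)additivity and the previous step,
\[
p_\mathfrak{x}(F_\mathfrak{x}^{\mathcal{N}})=\sum_{w\in F_\mathfrak{x}^{\mathcal{N}}}p_\mathfrak{x}(\qty{w})=0,
\]
which is the claim. (Equivalently, one could argue straight from \eqref{eq:semHalp}: $\ev{\mathcal{N},w}\models\square_\mathfrak{x}\varphi$ gives $p_\mathfrak{x}(W_\varphi)=1$, hence $p_\mathfrak{x}(W\setminus W_\varphi)=0$, and $w\in W\setminus W_\varphi$, so $p_\mathfrak{x}(\qty{w})\le p_\mathfrak{x}(W\setminus W_\varphi)=0$ by monotonicity.)

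I do not anticipate any genuine obstacle: the result is essentially a corollary of the semantic characterization \eqref{eq:semHalp'} (equivalently \eqref{eq:semHalp}). The one point deserving a word of care is that the existential quantifier over $\varphi$ in the definition of $F_\mathfrak{x}^{\mathcal{N}}$ is \emph{not} assumed to be witnessed by a single formula across all worlds — but this is harmless, because the conclusion $p_\mathfrak{x}(\qty{w})=0$ is drawn separately for each $w$ using its own witness, and the final summation needs nothing more than that each term vanishes.
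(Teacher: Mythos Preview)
Your proposal is correct and follows essentially the same two-step structure as the paper: show $p_\mathfrak{x}(\qty{w})=0$ for each $w\in F_\mathfrak{x}^{\mathcal{N}}$ via the semantic clause, then sum by additivity. The paper's main text uses your parenthetical alternative (\eqref{eq:semHalp} plus monotonicity of the measure) rather than the contrapositive of \eqref{eq:semHalp'}, but since these two clauses were just shown equivalent this is a cosmetic difference, not a substantive one.
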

\begin{proof}
Case  $F_{\mathfrak{x}}^{\mathcal{N}}=\varnothing$ is trivial. Otherwise, for arbitrary $w\in F_{\mathfrak{x}}^{\mathcal{N}}$, there is some $\varphi$ which by \eqref{eq:semHalp} has $p_\mathfrak{x}(W_\varphi)=1$. However, by definition of $F_{\mathfrak{x}}^{\mathcal{N}}$,  $\ev{\mathcal{N}, w}\not\models\varphi$ and hence $\qty{w}\subseteq W \setminus W_\varphi$. Then $p_\mathfrak{x}(\qty{w})\leq p_\mathfrak{x}(W \setminus W_\varphi)=1-p_\mathfrak{x}(W_\varphi)=0$. Therefore, by additivity, $p_{\mathfrak{x}}(F_{\mathfrak{x}}^{\mathcal{N}})=\sum_{w\in F_{\mathfrak{x}}^{\mathcal{N}}}p_{\mathfrak{x}}(\qty{w})=0$. \hfill $\blacksquare$
\end{proof}
The content of this theorem may be regarded as $\mathfrak{x}$ being certain that her firm beliefs are not false. Indeed, defining $\mathscr{N}_1\subset\mathscr{N}_0$ to be the class of probability structures that assign $p_\mathfrak{x}(\qty{w})>0$ over each $p_{\mathfrak{x}}$'s domain, one can show that \textsf{\textbf{S5}} is sound and complete w.r.t.\ $\mathscr{N}_1$ \citep[310]{halpern1991}. In other words: adding that $\mathfrak{x}$ is right in taking her beliefs to be correct, one ends up with the standard logic of (idealized, internalist) knowledge.  

However, probability assignments might vary not only from agent to agent but also from world to world. Hence, consider also Halpern's \emph{generalized probability structures}  $\mathcal{N}=\ev{W, V, p_\mathfrak{x}^{w}}_{w\in W, \mathfrak{x}\in A}$, where for any pair $\ev{w,\mathfrak{x}}\in W\times A$, $p_\mathfrak{x}^w$ delivers a probability measure over an algebra $\mathcal{A}\subseteq\mathcal{P}(W)$.

Moreover, let $\supp(p_\mathfrak{x})=\qty{\ev*{w,w'}\in W^{2}|p_\mathfrak{x}^{w}(\qty{w'})>0}$, and $\mathcal{M}_\mathcal{N}$ a Kripke model defined (over $A$) in such a way that for every $\mathfrak{x}\in A$, ${R}_\mathfrak{x}= \supp(p_\mathfrak{x})$. Given this identification, we finally note the following theorem: 
\begin{theorem}[\cite{halpern1991}]Any extension of \textsf{\textbf{K}} that includes either \textsf{D} or \textsf{T} is sound and complete w.r.t.\ some class of probability structures. 
\end{theorem}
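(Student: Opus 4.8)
The plan is to exploit the fact that, on generalized probability structures, the clause $(\dagger)$ is just Kripke semantics in a different dress, and then to run the usual canonical-model argument through this dictionary; the hypothesis ``$\textsf{D}$ or $\textsf{T}$'' is needed at exactly one place. First I would prove a \emph{reduction lemma}: for any generalized probability structure $\mathcal{N}=\ev{W,V,p_\mathfrak{x}^w}_{w\in W,\mathfrak{x}\in A}$ (with $W$ countable and algebra $\mathcal{P}(W)$, as throughout), any world $w$ and any formula $\varphi$, one has $\ev{\mathcal{N},w}\models\varphi$ iff $\ev{\mathcal{M}_\mathcal{N},w}\models\varphi$ in the ordinary Kripke sense. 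This is an induction on $\varphi$: the atomic and Boolean clauses are literally the same on both sides, and the modal clause for $\square_\mathfrak{x}\psi$ is exactly the content of $(\dagger)\Leftrightarrow(\dagger')$ together with the defining identity $R_\mathfrak{x}=\supp(p_\mathfrak{x})$. Consequently $\mathcal{N}\models\varphi$ iff $\mathcal{M}_\mathcal{N}\models\varphi$, so a class of generalized probability structures validates exactly the formulas that the corresponding class of Kripke \emph{models} validates (i.e.\ validity at all worlds).

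Conversely I would prove a \emph{representation lemma}: every countable Kripke model $\mathcal{M}=\ev{W,R,V}$ in which each $R_\mathfrak{x}$ is \emph{serial} equals $\mathcal{M}_\mathcal{N}$ for some generalized probability structure. Indeed, for each $\mathfrak{x}$ and $w$ the successor set $\qty{w'\in W\mid wR_\mathfrak{x}w'}$ is non-empty and countable, so one enumerates it and puts a probability measure $p_\mathfrak{x}^w$ on $\mathcal{P}(W)$ whose singleton values are positive exactly on that set (say, geometrically decreasing weights summing to $1$); then $\supp(p_\mathfrak{x})=R_\mathfrak{x}$. Seriality is unavoidable here: a probability measure has total mass $1$, hence gives some singleton positive mass, so every world has a successor --- equivalently, $\textsf{D}$ is valid in \emph{every} generalized probability structure. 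This is precisely why the theorem needs $\textsf{D}$ (or $\textsf{T}$, since the $\textsf{T}$-instances $\square_\mathfrak{x}\varphi\supset\varphi$ and $\square_\mathfrak{x}\neg\varphi\supset\neg\varphi$ jointly give $\square_\mathfrak{x}\varphi\supset\Diamond_\mathfrak{x}\varphi$) and can have no analogue for bare $\textsf{\textbf{K}}$.

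Now let $L\supseteq\textsf{\textbf{K}}$ be any normal modal logic containing $\textsf{D}$ or $\textsf{T}$ (hence $\textsf{D}\in L$), and let $\mathscr{C}_L$ be the class of generalized probability structures $\mathcal{N}$ with $\mathcal{M}_\mathcal{N}\models L$. Soundness is then immediate from the reduction lemma: $\varphi\in L$ and $\mathcal{N}\in\mathscr{C}_L$ give $\mathcal{M}_\mathcal{N}\models\varphi$, hence $\mathcal{N}\models\varphi$. For completeness, suppose $\varphi\notin L$ and form the canonical model $\mathcal{M}^L$ over maximal $L$-consistent sets. By the Truth Lemma some $w_0$ has $\ev{\mathcal{M}^L,w_0}\not\models\varphi$; since every theorem of $L$ lies in every maximal $L$-consistent set, $\mathcal{M}^L\models L$ (validity \emph{in the model}); and because $\textsf{D}\in L$, the set $\qty{\psi\mid\square_\mathfrak{x}\psi\in w}$ is $L$-consistent for every $w,\mathfrak{x}$, so it Lindenbaum-extends to a maximal consistent set, i.e.\ every world of $\mathcal{M}^L$ has an $R_\mathfrak{x}$-successor and $\mathcal{M}^L$ is serial. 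Finally I would pass to a countable elementary submodel $\mathcal{M}'\preceq\mathcal{M}^L$ (in the first-order signature of the standard translation) containing $w_0$. Each of ``$\psi$ holds at every world'' ($\forall x\,\mathrm{ST}_x(\psi)$, for every theorem $\psi$ of $L$), ``every world has an $R_\mathfrak{x}$-successor'' ($\forall x\exists y\,R_\mathfrak{x}(x,y)$), and ``$\varphi$ fails at $w_0$'' is a first-order condition, so all are inherited by $\mathcal{M}'$: thus $\mathcal{M}'$ is a countable serial Kripke model validating $L$ with $\ev{\mathcal{M}',w_0}\not\models\varphi$. By the representation lemma $\mathcal{M}'=\mathcal{M}_\mathcal{N}$ for some generalized $\mathcal{N}$; since $\mathcal{M}_\mathcal{N}\models L$, $\mathcal{N}\in\mathscr{C}_L$, and by the reduction lemma $\ev{\mathcal{N},w_0}\not\models\varphi$. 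Hence $\mathcal{N}\not\models\varphi$, and $L$ is sound and complete w.r.t.\ $\mathscr{C}_L$.

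The step I expect to need the most care is the descent to a countable model: one must keep both that $\mathcal{M}'$ \emph{validates} $L$ (not merely that its frame satisfies some conditions) and that it stays serial, and the clean route is to observe that both properties are first-order over the standard-translation signature and hence preserved by elementary submodels --- which also lets the argument run uniformly for \emph{every} normal extension, with no appeal to canonicity or even Kripke-completeness of $L$. For the handful of systems actually used in the paper one may instead invoke the finite model property and apply the representation lemma directly to finite serial models, sidestepping the submodel argument.
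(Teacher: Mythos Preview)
The paper does not actually prove this theorem; it simply records the result with a citation to \cite{halpern1991} immediately after introducing generalized probability structures and the identification $R_\mathfrak{x}=\supp(p_\mathfrak{x})$, and then moves on to Sect.~\ref{sec:proof}. So there is no in-paper argument against which to compare yours.

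On its own merits your proof is correct and is essentially the natural route suggested by the paper's setup. The reduction lemma is precisely the equivalence $(\dagger)\Leftrightarrow(\dagger')$ (lifted from $p_\mathfrak{x}$ to $p_\mathfrak{x}^w$) pushed through a formula induction, and the representation lemma is the obvious converse; your remark that normalization of $p_\mathfrak{x}^w$ \emph{forces} seriality pinpoints exactly why the hypothesis ``\textsf{D} or \textsf{T}'' is needed and why no analogue can exist for bare $\textsf{\textbf{K}}$. The canonical-model step is standard, and working with \emph{model} validity ($\mathcal{M}^L\models L$ via the Truth Lemma) rather than frame validity is the right choice, since the theorem must cover normal extensions that need not be canonical or even Kripke-frame-complete. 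The descent to a countable elementary submodel in the standard-translation signature is a clean way to meet the countability restriction on $W$ that the paper imposes (see the footnote after the definition of probability structures); all three properties you need---validity of each $L$-theorem, seriality, and failure of $\varphi$ at $w_0$---are first-order over that signature and hence transfer. The finite-model-property shortcut you mention at the end would also suffice for the particular systems the paper uses, but your argument is the uniform one.
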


\section{Frauchiger-Renner reconsidered}\label{sec:proof}
Building on \cite[275--80]{delrio}, but with our refined inventory from Sect.\ \ref{sec:S5}, we can now establish the following reconstruction of the Frauchiger-Renner theorem. First off, like Nurgalieva and del Rio \cite[280]{delrio}, we assume all agents in $A=\qty{\mathfrak{a}, \mathfrak{c}, \mathfrak{d}, \mathfrak{g}}$ to share a Kripke model $\mathcal{M}^{FR}=\ev{W, R, V}$. This requires an extended language which includes symbols for state assignments and measurement outcomes. The set $P$ of elementary propositions over which the model is defined then comprises all value claims of the form $M_\mathfrak{x}^t=y$ (short: $y^{t}_\mathfrak{x}$) considered in the \textsf{FR-Protocol}, as well as the state assignments at given times in some run of the experiment. 

To adequately represent the latter, we let $\ket*{x}^{t}_\mathfrak{s}$ stand for `$\mathfrak{s}$ is in state $\ket*{x}$ at $t$'. This is an abuse of notation, as we now let the (indexed) ket-symbol refer to states as well as proposition about states, depending on context. Moreover, to track the (standard) usage of quantum state assignments in the proof as given above, we impose the following condition:
\begin{equation}\label{eq:tensor}
\forall t: \ket*{v(t)}_\mathfrak{xy}=\ket*{a}_\mathfrak{x}\otimes\ket*{b}_\mathfrak{y}\text{ iff }\ket*{a}_\mathfrak{x}^t\wedge\ket*{b}_\mathfrak{y}^t.\tag{$\otimes/\wedge$}
\end{equation}
Note that on the left-hand side, the ket-symbol denotes states whereas on the right-hand side, it denotes propositions.

Since we model the \textsf{FR-Protocol}, the complex propositions at times $t\in\qty{1,\ldots,5}$ therein must be assumed true in all $w\in W$. For further reference, we formalize these as:\footnote{It would be tempting to introduce an \emph{action language} \citep[112 ff.]{ditmarsch} in order to track the epistemic \emph{dynamics} involved. But this would complicate the formalism further while adding few insights. The same holds for a first order predicate language, which would introduce the need to fix interpretations in semantic proofs.}
\begin{itemize}
\item[$\varphi_0$] $\bigwedge_{t\in[0,1)}\ket*{\text{init}}_\mathfrak{r}^{t} \bigwedge_{t\in[0,t_1)}\ket{0}_\mathfrak{a}^t\bigwedge_{t\in[0,t_1]}\ket*{0}_\mathfrak{l}^t\bigwedge_{t\in[0,t_2)}\ket{0}_\mathfrak{g}^t$
\item[$\varphi_1$] $(0_\mathfrak{a}^{t_1} \vee 1_\mathfrak{a}^{t_1}) \wedge (0_\mathfrak{a}^{t_1} \supset  (\kket{0}^{t_1}_{\mathfrak{a}_\mathfrak{r}}\bigwedge_{t\in(t_1,t_2]}\ket*{0}^{t}_\mathfrak{l})) \wedge (1_\mathfrak{a}^{t_1} \supset  (\kket{1}^{t_1}_{\mathfrak{a}_\mathfrak{r}}\bigwedge_{t\in(t_1,t_2]}\ket*{+}^t_\mathfrak{l}))$
\item[$\varphi_2$] $(0_\mathfrak{g}^{t_2}\vee 1_\mathfrak{g}^{t_2})\wedge (0_\mathfrak{g}^{t_2}\supset \kket{0}^{t_2}_{\mathfrak{g}_\mathfrak{l}})\wedge(1_\mathfrak{g}^{t_2}\supset\kket{1}^{t_2}_{\mathfrak{g}_\mathfrak{l}})$
\item[$\varphi_3$] $(\text{ok}_\mathfrak{c}^{t_3}\vee \text{fail}_\mathfrak{c}^{t_3})\wedge (\text{ok}_\mathfrak{c}^{t_3}\supset  \bigwedge_{t\in[t_3,5]}\kket{\text{ok}}^t_{\mathfrak{c}_\mathfrak{ra}})\wedge(\text{fail}_\mathfrak{c}^{t_3}\supset \bigwedge_{t\in[t_3,5]}\kket{\text{fail}}^t_{\mathfrak{c}_\mathfrak{ra}})$
\item[$\varphi_4$] $(\text{ok}_\mathfrak{d}^{t_4}\vee \text{fail}_\mathfrak{d}^{t_4})\wedge (\text{ok}_\mathfrak{d}^{t_4}\supset  \bigwedge_{t\in[t_4,5]}\kket{\text{ok}}^t_{\mathfrak{d}_\mathfrak{lg}})\wedge(\text{fail}_\mathfrak{d}^{t_4}\supset \bigwedge_{t\in[t_4,5]}\kket{\text{fail}}^t_{\mathfrak{d}_\mathfrak{lg}})$
\item[$\varphi_5$] $((\text{ok}_\mathfrak{c}^{t_3}\supset\kkket{\text{ok}}^5_{\mathfrak{d}_{\mathfrak{c}_\mathfrak{ra}}})\wedge (\text{fail}_\mathfrak{c}^{t_3}\supset\kkket{\text{fail}}^5_{\mathfrak{d}_{\mathfrak{c}_\mathfrak{ra}}}))\wedge ((\text{ok}_\mathfrak{d}^{t_4}\supset\kkket{\text{ok}}^5_{\mathfrak{c}_{\mathfrak{d}_\mathfrak{lg}}})\wedge (\text{fail}_\mathfrak{d}^{t_4}\supset\kkket{\text{fail}}^5_{\mathfrak{c}_{\mathfrak{d}_\mathfrak{lg}}}))$
\end{itemize}

Here we have introduced some new notation, namely $\kket{y}_{\mathfrak{x}_\mathfrak{s}}$ for a state of $\mathfrak{x}$ that unambiguously indicates that system $\mathfrak{s}$ under study by $\mathfrak{x}$ was in state $\ket*{y}_\mathfrak{s}$. The meaning of $\kkket{y}_{\mathfrak{x}_{\mathfrak{y}_\mathfrak{s}}}$ should be obvious. Emphatically, we do \emph{not} impose that this state of affairs must have come about by a(n actual) measurement of sorts. $\mathfrak{x}$ may convince herself that $\ket*{y}^t_\mathfrak{s}$ by reasoning or communication.

Moreover, each $\mathfrak{x}\in A$ must be capable of applying the appropriate unitaries in order to determine the relevant Heisenberg operators in \textsf{U}. Hence that the unitaries that model the evolution of the global sate are those of \hyperref[thm:OkZero]{A 1} in the \hyperref[sec:append]{Appendix} is another assumption that should be considered part of the \textsf{FR-Protocol}. The restriction $\mathfrak{s}\neq\mathfrak{x}$ in \textsf{U} implies that Amanda cannot use these operators. For her, we let $U_\mathfrak{a}=  (\pi_0\otimes\mathbb{I} + \pi_1\otimes\sigma_x)$, which is just the $\mathfrak{lg}$-part of $U_{t_2}$, postselected for $M^{t_1}_\mathfrak{a}=1$. We will refer to the assumption that these are the `appropriate' unitaries for the respective agents as $\upsilon$ below.

Since $\varphi_{0-5}\equiv_{df}\bigwedge_{i=0}^{5}\varphi_i$ and $\upsilon$ will thereby hold in all worlds $w'\in W$ accessible by any kind of (non-trivial) access relation $R_\mathfrak{x}$ for an agent $\mathfrak{x}\in A$ from any world $w\in W$, we immediately get that $\mathcal{M}^{FR}\models\square_\mathfrak{x}(\varphi_{0-5}\wedge\upsilon)$ for all $\mathfrak{x}$. Moreover, since the protocol has been agreed upon by all agents, we assume that $\mathcal{M}^{FR}\models\square_\mathfrak{y}\square_\mathfrak{x}(\varphi_{0-5}\wedge\upsilon)$, for arbitrary $\mathfrak{x},\mathfrak{y}$. $\varphi_{0-5}$ and $\upsilon$, in other words, constitute the (common) \emph{common knowledge} of the model $\mathcal{M}^{FR}$.

Let us now, given some model $\mathcal{M}=\ev{W, R, V}$, stipulate the following semantic connection between quantum probabilities and the operators $\square_\mathfrak{x}$, in order to recapture the essence of \textsf{Q} and \textsf{U}: Given a pointed model $\hat{\mathcal{M}}=\ev{\mathcal{M}, \hat{w}}$ with $\hat{w}\in V(\kket{v}^{t_0}_{\mathfrak{x}_\mathfrak{s}})$,
\begin{equation}\label{eq:Q}
\ev{\mathcal{M}, \hat{w}}\models\square_{\mathfrak{x}}\varphi_{t}\text{ iff } {}_\mathfrak{s}\!\ev{\pi_\varphi^{t}}{v}_\mathfrak{s} =1\tag{$\ast$},
\end{equation}
where $\kket{v}_{\mathfrak{x}_\mathfrak{s}}$ unambiguously indicates that $\mathfrak{s}$'s state was $\ket*{v}_\mathfrak{s}$ at $t_0$, $\pi_\varphi^{t}$ is a Heisenberg operator at $t\geq t_0$, and $\varphi_t$ concerns some value measured at $t$, represented by $\pi_\varphi^t$ in the quantum formalism. 

It is easy to see that \eqref{eq:Q} partitions $W^{2}$ as follows: Given some model $\mathcal{M}=\ev{W, R, V}$ over a set $A$ of agents, $R_\mathfrak{x}=\qty{\ev{w,w'}\in W^2 | wR_\mathfrak{x}w'}$, and we let $\bar{R}_\mathfrak{x}=\qty{\ev{w,w'}\in W^2 | \neg wR_\mathfrak{x}w'}$ for each $\mathfrak{x}\in A$. So $W^{2}={R}_\mathfrak{x}\cup\bar{{R}}_\mathfrak{x}$ and ${R}_\mathfrak{x}\cap\bar{{R}}_\mathfrak{x}=\varnothing$. Let also $Y=\lbrace \ev*{w,w'}\in W^2|\exists t, y: \ev{\mathcal{M}, w}\models \kket{y}^{t_0}_{\mathfrak{x}_\mathfrak{s}} \text{ and } \ev{\mathcal{M}, w'}\models \neg y_\mathfrak{x}^{t}\rbrace$. Then $\mathcal{M}$ satisfies \eqref{eq:Q}  only if $Y\subseteq\bar{{R}}_\mathfrak{x}$ and hence ${R}_\mathfrak{x}\cap Y=\varnothing$, for all $\mathfrak{x}\in A$. 

Clearly, there is a close connection between Halpern's generalized probability structures and \eqref{eq:Q}. However, \eqref{eq:Q} constrains the reference world $\hat{w}$ from which accessibilities are evaluated in such a way that $\hat{w}\in V(\kket{v}^{t_0}_{\mathfrak{x}_\mathfrak{s}})$. We may read this condition as stating that, relative to $\hat{w}$ in which $\mathfrak{x}$ assigns $\ket*{v}_\mathfrak{s}$ to $\mathfrak{s}$, $\mathfrak{x}$ gives measure 1 to the subset of $W$ in which $\varphi$ is true. Moreover, given the equivalence between \eqref{eq:semHalp} and \eqref{eq:semHalp'},\footnote{In addition, consider Thm.\ 5.1 in \cite[][312]{halpern1991}.} another way of looking at this condition is that, whatever the global quantum states $\ket*{\Psi}_{\hat{w}}, \ket*{\Psi'}_w$ for $\hat{w}$ and $w$, respectively, so long as $\ket*{\Psi}_{\hat{w}}$ is of the form $\ket*{\Psi}_{\hat{w}}=\ldots\otimes\kket{v}_{\mathfrak{x}_\mathfrak{s}}\otimes\ldots$, $\varphi$ holds true in any world $w$ such that $|{}_{\hat{w}}\!\braket*{\Psi}{\Psi'}_w|$ is non-zero.\footnote{Note, however, that this condition may never \emph{strictly} apply: interactions between systems are basically omnipresent, and they often entangle the systems involved. A global quantum state for a whole possible world would hence be unlikely to have the required product-state form. The condition should rather be considered in the asymptotic limit of free states for single systems.}

\begin{remark}[Nested box-formulas] \eqref{eq:Q} also allows nested assertions of the form $\square_\mathfrak{x}\square_\mathfrak{y}\varphi$: Assume that $\hat{w}\in V(\kkket{v}^{t_\mathfrak{x}}_{\mathfrak{x}_{\mathfrak{y}_\mathfrak{s}}})$ at some arbitrary time $t_\mathfrak{x}$, where $\kkket{v}^{t_\mathfrak{x}}_{\mathfrak{x}_{\mathfrak{y}_\mathfrak{s}}}$ unambiguously indicates that $\kket{v}^{t_\mathfrak{y}}_{\mathfrak{y}_\mathfrak{s}}$, with $\mathfrak{y}$ some agent, and this state in turn unambiguously indicates $\ket*{v}_\mathfrak{s}$ on some system $\mathfrak{s}$ under study by $\mathfrak{y}$ at $t_\mathfrak{y}$. Then given that ${}_{\mathfrak{y}_\mathfrak{s}}\!\langle\!\bra{v}\!|\pi_v^{t_\mathfrak{x}}\kket{v}_{\mathfrak{y}_\mathfrak{s}}=1$ (where $\pi_v^{t_\mathfrak{x}}=|\!\dyad*{v\rangle\!}{\!\langle v}\!|$), if ${}_\mathfrak{s}\!\ev*{\pi_\varphi^{t}}{v}_\mathfrak{s} =1$ for $t\geq t_\mathfrak{y}$, $\square_\mathfrak{x}\square_\mathfrak{y}\varphi_t$ holds in $\hat{w}$. \hfill $\blacklozenge$
\end{remark}

Upon imposing \eqref{eq:Q}, we can immediately see that \textsf{S} follows in \textsf{\textbf{T}} in the following way: 
\begin{lemma}\label{lm:1}Given \eqref{eq:Q}, \textsf{S} is a theorem of \textsf{\textbf{T}}
\end{lemma}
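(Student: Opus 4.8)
The plan is to observe that, once \eqref{eq:Q} has fixed the formalisation, \textsf{S} is nothing but an instance of axiom \textsf{D}, and that \textsf{D} is derivable from \textsf{T}. First I would formalise \textsf{S}. Reading ``$O\neq\tilde{o}$ at $t$'' as the proposition $\neg\tilde{o}^{t}_{\mathfrak{x}}$, and noting that this proposition is the one represented in the quantum formalism by the complementary Heisenberg projector $\mathbb{I}-\pi_{\tilde{o}}^{t}$, \eqref{eq:Q} tells us that ``$\mathfrak{x}$ is certain that $O\neq\tilde{o}$ at $t$'' must be rendered as $\square_{\mathfrak{x}}\neg\tilde{o}^{t}_{\mathfrak{x}}$ (the box scoping over the negation), not as the trivial $\neg\square_{\mathfrak{x}}\tilde{o}^{t}_{\mathfrak{x}}$. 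Hence \textsf{S} becomes the schema $\square_{\mathfrak{x}}\tilde{o}^{t}_{\mathfrak{x}}\supset\neg\square_{\mathfrak{x}}\neg\tilde{o}^{t}_{\mathfrak{x}}$, i.e.\ an instance of \textsf{D} with $\varphi=\tilde{o}^{t}_{\mathfrak{x}}$.

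Next I would derive this in \textsf{\textbf{T}}. Syntactically: instantiate axiom \textsf{T} twice to get $\square_{\mathfrak{x}}\tilde{o}^{t}_{\mathfrak{x}}\supset\tilde{o}^{t}_{\mathfrak{x}}$ and $\square_{\mathfrak{x}}\neg\tilde{o}^{t}_{\mathfrak{x}}\supset\neg\tilde{o}^{t}_{\mathfrak{x}}$; combine these by \textsf{PC} and \textsf{MP} to $(\square_{\mathfrak{x}}\tilde{o}^{t}_{\mathfrak{x}}\wedge\square_{\mathfrak{x}}\neg\tilde{o}^{t}_{\mathfrak{x}})\supset(\tilde{o}^{t}_{\mathfrak{x}}\wedge\neg\tilde{o}^{t}_{\mathfrak{x}})$; since $\neg(\tilde{o}^{t}_{\mathfrak{x}}\wedge\neg\tilde{o}^{t}_{\mathfrak{x}})$ is a \textsf{PC}-tautology, contraposition and \textsf{MP} give $\neg(\square_{\mathfrak{x}}\tilde{o}^{t}_{\mathfrak{x}}\wedge\square_{\mathfrak{x}}\neg\tilde{o}^{t}_{\mathfrak{x}})$, which by \textsf{PC} is $\square_{\mathfrak{x}}\tilde{o}^{t}_{\mathfrak{x}}\supset\neg\square_{\mathfrak{x}}\neg\tilde{o}^{t}_{\mathfrak{x}}$. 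Equivalently, semantically: in any pointed reflexive model $\hat{\mathcal{M}}$ with $\hat{\mathcal{M}}\models\square_{\mathfrak{x}}\tilde{o}^{t}_{\mathfrak{x}}$, reflexivity ($\hat{w}R_{\mathfrak{x}}\hat{w}$) forces $\hat{\mathcal{M}}\models\tilde{o}^{t}_{\mathfrak{x}}$, hence $\hat{\mathcal{M}}\not\models\square_{\mathfrak{x}}\neg\tilde{o}^{t}_{\mathfrak{x}}$ with $\hat{w}$ itself as witness; as \textsf{\textbf{T}} is sound and complete w.r.t.\ the reflexive frames, the schema is a theorem. As a cross-check I would also note the purely quantum-mechanical version: \eqref{eq:Q} could never make both $\square_{\mathfrak{x}}\tilde{o}^{t}_{\mathfrak{x}}$ and $\square_{\mathfrak{x}}\neg\tilde{o}^{t}_{\mathfrak{x}}$ true at a world meeting its state precondition, since ${}_{\mathfrak{s}}\!\ev{\pi_{\tilde{o}}^{t}}{v}_{\mathfrak{s}}+{}_{\mathfrak{s}}\!\ev{\mathbb{I}-\pi_{\tilde{o}}^{t}}{v}_{\mathfrak{s}}=1$ for normalised $\ket*{v}$, so the two Born weights cannot both equal $1$.

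I do not expect a genuine obstacle here: the derivation is one line of modal reasoning once the formalisation is fixed, and it uses only axiom \textsf{T} (reflexivity), not \textsf{4} or \textsf{5} --- which is precisely why the ``stronger'' version of the theorem will later survive in merely serial frames. The only point requiring care is the formalisation step itself: one must be explicit that \eqref{eq:Q}, applied to the complementary projector $\mathbb{I}-\pi_{\tilde{o}}^{t}$, licenses the non-trivial ``box-over-negation'' reading of ``certain that $O\neq\tilde{o}$''; without this, \textsf{S} would collapse either to a triviality or to a claim not underwritten by the quantum formalism. (In the two-outcome measurements of the \textsf{FR-Protocol} the complementary projector is just the projector onto the other eigenvalue, so $\neg\tilde{o}^{t}_{\mathfrak{x}}$ coincides with the competing value claim, which makes the identification unambiguous in the case actually used.)
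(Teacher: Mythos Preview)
Your proposal is correct and follows essentially the same route as the paper: use \eqref{eq:Q} to formalise \textsf{S} as $\neg(\square_\mathfrak{x}\tilde{o}^{t}\wedge\square_\mathfrak{x}\neg\tilde{o}^{t})$, observe that this is propositionally equivalent to an instance of \textsf{D}, and then derive \textsf{D} from two applications of \textsf{T}. Your write-up is in fact a bit more explicit than the paper's about \emph{why} the box-over-negation reading is the right one (via the complementary projector) and adds the Born-weight cross-check, but the argumentative structure is the same.
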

\begin{proof}Suppose that ${}_\mathfrak{s}\!\ev{\pi_o^t}{v}_\mathfrak{s} =1$. Then by \eqref{eq:Q}, a given pointed model $\hat{\mathcal{M}}= \ev{\mathcal{M}, \hat{w}}$ that satisfies $\hat{w}\in V(\kket{v}^{t_0}_{\mathfrak{x}_\mathfrak{s}})$ for $t_0\leq t$ will also satisfy $\hat{\mathcal{M}}\models\square_\mathfrak{x}O_t=o$ and vice versa. Hence, \textsf{S} is equivalent to assuming $\hat{\mathcal{M}}\models\neg(\square_\mathfrak{x}O_t=o \wedge \square_\mathfrak{x}\neg O_t= o)$ for any such model. Using \textsf{PC} and the definition of $\Diamond_\mathfrak{x}$, this is equivalent to $\hat{\mathcal{M}}\models\neg\square_\mathfrak{x}O_t=o \vee \Diamond_\mathfrak{x}O_t= o$, which in turn is equivalent to $\hat{\mathcal{M}}\models\square_\mathfrak{x}O_t=o \supset \Diamond_\mathfrak{x}O_t= o$. This formula is true in reflexive models, as it is an instance of axiom \textsf{D} which follows from \textsf{T}. (We can easily see this: assume that, for any $\varphi, \mathfrak{x}$, $\square_\mathfrak{x}\varphi$ holds. From \textsf{T}, we then get $\varphi$. Now assume also that $\neg\Diamond_\mathfrak{x}\varphi$, which is equivalent to $\square_\mathfrak{x}\neg\varphi$. Using \textsf{T} again, we get a contradiction $(\varphi \wedge \neg\varphi)$.)  \hfill $\blacksquare$
\end{proof}

The above proof of \textsf{S} from \eqref{eq:Q} would have been possible in a weaker axiom system that includes \textsf{D} instead of \textsf{T}. However, recovering \textsf{C} within \textsf{\textbf{T}} as follows shows what is really at stake:
\begin{lemma}Given \eqref{eq:Q}, \textsf{C} is a theorem of \textsf{\textbf{T}}\label{lm:2}
\end{lemma}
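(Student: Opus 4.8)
The plan is to mirror the structure of the proof of Lemma~\ref{lm:1}, reducing the rule \textsf{C} to a formula provable in \textsf{\textbf{T}} once \eqref{eq:Q} is in force. First I would translate \textsf{C} into the object language: the premiss ``$\mathfrak{x}$ is certain that $\mathfrak{y}$ is certain that $O = \tilde{o}$ at $t$'' becomes $\square_\mathfrak{x}\square_\mathfrak{y}(O_t = \tilde{o})$, and the conclusion ``$\mathfrak{x}$ is certain that $O = \tilde{o}$ at $t$'' becomes $\square_\mathfrak{x}(O_t = \tilde{o})$. So \textsf{C} amounts to the schema $\square_\mathfrak{x}\square_\mathfrak{y}\varphi \supset \square_\mathfrak{x}\varphi$, i.e.\ to the characteristic \textsf{T}-like axiom applied \emph{under} a box. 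The key observation is that this follows by necessitation from axiom \textsf{T} itself: from the instance $\square_\mathfrak{y}\varphi \supset \varphi$ (a theorem of \textsf{\textbf{T}}), apply \textsf{N} to obtain $\square_\mathfrak{x}(\square_\mathfrak{y}\varphi \supset \varphi)$, then use \textsf{K} and \textsf{MP} to distribute the box and get $\square_\mathfrak{x}\square_\mathfrak{y}\varphi \supset \square_\mathfrak{x}\varphi$, as desired.

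The remaining work is to check that this purely proof-theoretic fact genuinely captures the inference \textsf{C} once probabilistic certainty is read via \eqref{eq:Q}, and in particular that the nested box-formula $\square_\mathfrak{x}\square_\mathfrak{y}\varphi$ that occurs in \textsf{C}'s premiss is well-defined under the semantics. Here I would invoke the Remark on nested box-formulas: \eqref{eq:Q} licenses $\square_\mathfrak{x}\square_\mathfrak{y}\varphi_t$ precisely when $\mathfrak{x}$'s state at the reference world unambiguously indicates $\mathfrak{y}$'s state, which in turn unambiguously indicates $\ket*{v}_\mathfrak{s}$ with ${}_\mathfrak{s}\!\ev*{\pi_\varphi^t}{v}_\mathfrak{s} = 1$. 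So a pointed model $\hat{\mathcal{M}} = \ev{\mathcal{M}, \hat{w}}$ with $\hat{w} \in V(\kkket{v}^{t_\mathfrak{x}}_{\mathfrak{x}_{\mathfrak{y}_\mathfrak{s}}})$ satisfies $\square_\mathfrak{x}\square_\mathfrak{y}(O_t = \tilde{o})$ exactly when the relevant Born-rule condition holds, and then the proof-theoretic derivation above transfers the certainty to $\mathfrak{x}$ alone.

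I would then present the formal derivation as a short numbered chain: (1)~$\square_\mathfrak{y}\varphi \supset \varphi$ by \textsf{T}; (2)~$\square_\mathfrak{x}(\square_\mathfrak{y}\varphi \supset \varphi)$ by \textsf{N} from (1); (3)~$\square_\mathfrak{x}(\square_\mathfrak{y}\varphi \supset \varphi) \supset (\square_\mathfrak{x}\square_\mathfrak{y}\varphi \supset \square_\mathfrak{x}\varphi)$ by \textsf{K}; (4)~$\square_\mathfrak{x}\square_\mathfrak{y}\varphi \supset \square_\mathfrak{x}\varphi$ by \textsf{MP} from (2),(3); then instantiate $\varphi := (O_t = \tilde{o})$ and invoke \eqref{eq:Q} plus the Remark to identify the antecedent and consequent with the premiss and conclusion of \textsf{C}. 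The main obstacle, and the point worth flagging explicitly, is that \textsf{C} as stated by Frauchiger and Renner is indifferent to \emph{which} agent's certainty sits inside — it even allows the outer and inner agent to differ — and one must be careful that the derivation does not secretly rely on $\mathfrak{x} = \mathfrak{y}$; it does not, since step~(2) only needs \textsf{N} for the outer agent $\mathfrak{x}$ and step~(1) only needs \textsf{T} for the inner agent $\mathfrak{y}$, both of which are available in (the $n$-agent version of) \textsf{\textbf{T}}. This is exactly why, as the text promises, \textsf{C} ``shows what is really at stake'': unlike \textsf{S}, it genuinely consumes axiom \textsf{T} (via necessitation), not merely the weaker \textsf{D}.
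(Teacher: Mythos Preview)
Your derivation is correct: the chain \textsf{T} $\to$ \textsf{N} $\to$ \textsf{K} $\to$ \textsf{MP} yields $\square_\mathfrak{x}\square_\mathfrak{y}\varphi \supset \square_\mathfrak{x}\varphi$ in any multi-agent \textsf{\textbf{T}}, and your appeal to \eqref{eq:Q} and the Remark on nested box-formulas to identify premiss and conclusion of \textsf{C} with the antecedent and consequent of that schema is exactly right, including the side condition $\hat{w}\in V(\kkket{v}^{t_\mathfrak{x}}_{\mathfrak{x}_{\mathfrak{y}_\mathfrak{s}}})$.

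The paper, however, proves the same schema \emph{semantically} rather than syntactically: it assumes $\hat{\mathcal{M}}\models\square_\mathfrak{x}\square_\mathfrak{y}\varphi$ in a reflexive pointed model, unfolds the Kripke clause to get $\ev{\mathcal{M},w}\models\square_\mathfrak{y}\varphi$ at every $R_\mathfrak{x}$-successor $w$, and then uses reflexivity of $R_\mathfrak{y}$ directly (so $wR_\mathfrak{y}w$) to conclude $\ev{\mathcal{M},w}\models\varphi$ for each such $w$, whence $\hat{\mathcal{M}}\models\square_\mathfrak{x}\varphi$. The paper in fact notes immediately afterward that your syntactic route (``a sequential application of the factivity axiom, \textsf{T}, and necessitation'') is available as an alternative. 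The semantic argument has the expository advantage of making visible \emph{which} frame property is doing the work---reflexivity of the inner agent's relation---which feeds directly into the paper's subsequent point that \textsf{4} and \textsf{5} play no role and into the construction of the serial/transitive/Euclidean counterexample in Fig.~\ref{fig:no2}. Your syntactic route is cleaner as a derivation and makes the same point at the axiom level (only \textsf{T} is consumed, not \textsf{4} or \textsf{5}); either proof suffices.
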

\begin{proof}By \eqref{eq:Q} and the above remark on nested box-formulas, proving \textsf{C} within \textsf{\textbf{T}} boils down to proving $\hat{\mathcal{M}}\models \square_\mathfrak{x}\square_\mathfrak{y}\varphi\supset\square_\mathfrak{x}\varphi$, for $\hat{\mathcal{M}}=\ev{\mathcal{M}, \hat{w}}$ reflexive and $\hat{w}\in V(\kkket{v}^t_{\mathfrak{x}_\mathfrak{y}})$, with $\kkket{v}_{\mathfrak{x}_\mathfrak{y}}$ and $t$ as required. Hence, assume that $\hat{\mathcal{M}}\models \square_\mathfrak{x}\square_\mathfrak{y}\varphi$ for some  such pointed model. Then it holds that for any $w$ s.t.\ $\hat{w}R_\mathfrak{x}w$, $\ev{\mathcal{M}, w}\models \square_\mathfrak{y}\varphi$. In turn this means that for any $w'$ s.t.\ $w R_\mathfrak{y}w'$ it holds that $\ev{\mathcal{M}, w'}\models\varphi$. However, by reflexivity of $R_\mathfrak{y}$, $\ev{\mathcal{M}, w}\models \varphi$. Since $w$ is otherwise arbitrary, $\hat{\mathcal{M}}\models\square_\mathfrak{x}\varphi$. \hfill $\blacksquare$\end{proof} 

With \eqref{eq:Q} in place, we could have also proven this syntactically by a sequential application of the factivity axiom, \textsf{T}, and necessitation in the final step. Accordingly,  we have only used reflexivity in the proof, so we can see immediately that system \textsf{\textbf{T}} suffices. This underscores Nurgalieva and del Rio's \cite[278]{delrio} observation that ``[t]he introspection axioms'', \textsf{4} and \textsf{5}, ``are not directly applied in the Frauchiger-Renner setting[...].'' But, they also think that they ``resemble assumption \textsf{S}[...].'' (ibid.) As we have just seen, this coveys an unnecessarily strong status on \textsf{S}: it does not require an invocation of symmetry or transitivity. 

Still, in case an agent knows the outcome of some measurement, we may associate a quantum state to her that indicates this unambiguously. This is weaker than (positive) introspection (\textsf{4}), since it does not concern higher order knowledge. However, it may give rise to further knowledge via condition \eqref{eq:Q}. 

Now while \textsf{\textbf{T}} is usually considered `minimal' as a meaningful epistemic logic, it is nevertheless \emph{epistemic} in the sense of establishing a \emph{factive} notion of knowledge. Hence, when Frauchiger and Renner talk about certainty and impose \textsf{C}, they really must have \emph{knowledge} in mind.

The Frauchiger-Renner theorem may now be restated as follows:\footnote{Note that, like Nurgalieva and del Rio, we will allow some benign mixing of syntax and semantics in abbreviating more or less obvious semantic proof sequences by invocations of the corresponding axioms. We will generally also allow some amount of sloppiness when it comes to the exact involvement of propositional logic.}
\begin{theorem}[FR; \cite{frauchiger}]Given conditions \eqref{eq:Q}, \textsf{U}, as well as a reflexive Kripke model $\mathcal{M}^{FR}=\ev{W, R, V}$ over $A=\qty{\mathfrak{a}, \mathfrak{c}, \mathfrak{d}, \mathfrak{g}}$ and with $P$ containing all value statements and state assignments in the \textsf{FR-Protocol} such that $\varphi_{0-5}$ and $\upsilon$ are valid in all $w\in W$. Then any $\hat{w}\in W$ such that $\hat{w}\in V(1^{t_1}_\mathfrak{a})\cap V(1^{t_2}_\mathfrak{g})\cap V(\text{ok}^{t_3}_\mathfrak{c})\cap V(\text{ok}^{t_4}_\mathfrak{d})$ satisfies $\ev{\mathcal{M}^{FR}, \hat{w}}\models \perp$.
\end{theorem}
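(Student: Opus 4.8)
The plan is to replay, inside the single pointed model $\ev{\mathcal{M}^{FR},\hat{w}}$, the five-step chain of Section~\ref{sec:FRThm}, stacking up nested certainty-operators until David is forced to hold both $\square_\mathfrak{d}\,\text{fail}^{t_4}_\mathfrak{d}$ and $\square_\mathfrak{d}\,\neg\text{fail}^{t_4}_\mathfrak{d}$, in violation of \textsf{S} (Lemma~\ref{lm:1}). Two standing facts are used throughout: (i) $\varphi_{0-5}$ and $\upsilon$ hold in every world of $W$, so $\square_\mathfrak{x}(\varphi_{0-5}\wedge\upsilon)$ and $\square_\mathfrak{y}\square_\mathfrak{x}(\varphi_{0-5}\wedge\upsilon)$ are available for all agents; and (ii) the structural constraint that $V$ never makes two distinct outcome-propositions $o^t_\mathfrak{x},o'^t_\mathfrak{x}$ of one and the same measurement both true, which, together with the disjunctions in $\varphi_1$--$\varphi_4$, lets one pass freely between $\square_\mathfrak{x}\neg o^t$ and $\square_\mathfrak{x}o'^t$. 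Here \eqref{eq:Q} is the workhorse turning a probability-$1$ amplitude into a box, \textsf{U} supplies the Heisenberg operators, and reflexivity enters only through the two lemmas.

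The forward half of each step is routine bookkeeping with \eqref{eq:Q}. From $1^{t_1}_\mathfrak{a}$ at $\hat{w}$, $\varphi_1$ gives $\kket{1}^{t_1}_{\mathfrak{a}_\mathfrak{r}}$ and $\ket*{+}^t_\mathfrak{l}$ on $(t_1,t_2]$; with $\varphi_0$, \eqref{eq:tensor}, the postselected unitary $U_\mathfrak{a}$ and \textsf{U}, Amanda establishes $\kket{\text{fail}}^{t_2}_{\mathfrak{a}_{\mathfrak{lg}}}$, and ${}_{\mathfrak{lg}}\!\ev{\pi_\text{fail}^{t_4}}{\text{fail}}_{\mathfrak{lg}}=1$ then yields $\square_\mathfrak{a}\,\text{fail}^{t_4}_\mathfrak{d}$ (this is $\alpha$). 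Likewise $\varphi_2$ and \eqref{eq:Q} give $\square_\mathfrak{g}\,1^{t_2}_\mathfrak{g}$; $\varphi_3$ and \eqref{eq:Q} give $\square_\mathfrak{c}\,\text{ok}^{t_3}_\mathfrak{c}$, and feeding ${}_\mathfrak{ralg}\!\ev{\Pi_{\neg(\text{ok}_\mathfrak{c}\wedge 0_\mathfrak{g})}}{\text{init}}_\mathfrak{ralg}=1$ of \hyperref[thm:OkZero]{A~1} into \eqref{eq:Q} (with \eqref{eq:PiokPi0} as the operators) gives $\square_\mathfrak{c}\,\neg(\text{ok}^{t_3}_\mathfrak{c}\wedge 0^{t_2}_\mathfrak{g})$, hence $\square_\mathfrak{c}\,1^{t_2}_\mathfrak{g}$ by \textsf{K} and (ii); and $\varphi_4$ with \eqref{eq:Q} gives $\square_\mathfrak{d}\,\text{ok}^{t_4}_\mathfrak{d}$, hence $\square_\mathfrak{d}\,\neg\text{fail}^{t_4}_\mathfrak{d}$.

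The substance lies in the retrodiction-and-reconstruction half. For Gustavo, one shows that $1^{t_1}_\mathfrak{a}$ — equivalently $\ket*{+}^t_\mathfrak{l}$ — holds in \emph{every} $\mathfrak{g}$-accessible world: a world with $0^{t_1}_\mathfrak{a}$ would, by $\varphi_0,\varphi_1$ and the protocol's physics ($\upsilon$ together with the faithfulness of $\mathcal{M}^{FR}$ to the \textsf{FR-Protocol}, or, equivalently, an application of \eqref{eq:Q} for Amanda followed by factivity), force $0^{t_2}_\mathfrak{g}$, which is incompatible by (ii) with $1^{t_2}_\mathfrak{g}$ — true throughout $R_\mathfrak{g}[\hat{w}]$ since $\square_\mathfrak{g}\,1^{t_2}_\mathfrak{g}$; this is Frauchiger and Renner's reductio. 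Then $\varphi_1$ supplies, throughout $R_\mathfrak{g}[\hat{w}]$, the chain of $\kket\cdot$/$\kkket\cdot$ indications that lets Amanda's derivation run inside the box, so by the remark on nested box-formulas $\square_\mathfrak{g}\square_\mathfrak{a}\,\text{fail}^{t_4}_\mathfrak{d}$, whence $\square_\mathfrak{g}\,\text{fail}^{t_4}_\mathfrak{d}$ by \textsf{C} (Lemma~\ref{lm:2}) ($\gamma$). Chris's step is structurally identical, with $\square_\mathfrak{c}\,1^{t_2}_\mathfrak{g}$ in the role that $\square_\mathfrak{g}\,1^{t_2}_\mathfrak{g}$ played, giving $\square_\mathfrak{c}\square_\mathfrak{g}\,\text{fail}^{t_4}_\mathfrak{d}$ and then $\square_\mathfrak{c}\,\text{fail}^{t_4}_\mathfrak{d}$ ($\chi$). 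Finally, David's receipt of Chris's announcement is encoded by $\varphi_5$ ($\text{ok}^{t_3}_\mathfrak{c}\supset\kkket{\text{ok}}^5_{\mathfrak{d}_{\mathfrak{c}_\mathfrak{ra}}}$), so he can reconstruct Chris's chain to obtain $\square_\mathfrak{d}\square_\mathfrak{c}\,\text{fail}^{t_4}_\mathfrak{d}$, hence $\square_\mathfrak{d}\,\text{fail}^{t_4}_\mathfrak{d}$ by \textsf{C} ($\delta$) — alternatively, reading the $1^{t_1}_\mathfrak{a}$-branch of \eqref{eq:David} as $\ket*{1}_\mathfrak{r}\ket*{1}_\mathfrak{a}\ket*{\text{fail}}_{\mathfrak{lg}}$ and applying \eqref{eq:Q} directly. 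Together with $\square_\mathfrak{d}\,\neg\text{fail}^{t_4}_\mathfrak{d}$ from the forward half, this violates \textsf{S}, so $\ev{\mathcal{M}^{FR},\hat{w}}\models\perp$.

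The hard part is making the retrodiction moves airtight. Condition \eqref{eq:Q} delivers certainty about the \emph{future} of a known state, not about its \emph{past}, so concluding ``in every world I hold possible, the pre-measurement state was $\ket*{+}_\mathfrak{l}$ and not $\ket*{0}_\mathfrak{l}$'' requires either localizing the physical fact that a pre-measurement eigenstate forces the corresponding eigenvalue in each accessible world — which must be read out of the stipulation that $\mathcal{M}^{FR}$ faithfully models the \textsf{FR-Protocol} with $\varphi_{0-5},\upsilon$ valid throughout $W$ — or routing it through \eqref{eq:Q} applied to whichever agent \emph{does} hold the relevant definite state (Amanda, in Gustavo's step) together with factivity (reflexivity). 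A secondary subtlety is the licence, in each reconstruction, to upgrade ``$\mathfrak{y}$'s state would license certainty of $\psi$'' to the genuine nested formula $\square_\mathfrak{x}\square_\mathfrak{y}\psi$: this is exactly what the remark on nested box-formulas secures, but only provided the relevant $\kket\cdot$- and $\kkket\cdot$-propositions hold across all of $R_\mathfrak{x}[\hat{w}]$ — which is why the common-knowledge status of $\varphi_{0-5}$ and $\upsilon$, and the convention that established outcomes carry unambiguous states, do the real work here rather than any introspection axiom.
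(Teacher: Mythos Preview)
Your proposal is correct and follows essentially the same four-step chain $(\mathfrak{a}\to\mathfrak{g}\to\mathfrak{c}\to\mathfrak{d})$ as the paper, invoking \eqref{eq:Q}, \textsf{U}, \textsf{K}, and Lemmas~\ref{lm:1}--\ref{lm:2} at the same junctures. The one noteworthy deviation is Gustavo's retrodiction: where you argue by reductio (a $0^{t_1}_\mathfrak{a}$-world would, via $\varphi_1$, Amanda's application of \eqref{eq:Q}, and factivity, force $0^{t_2}_\mathfrak{g}$, clashing with $\square_\mathfrak{g}1^{t_2}_\mathfrak{g}$), the paper instead applies \eqref{eq:Q} directly for Gustavo with the operator $\pi_{\neg 0_\mathfrak{l}}^{t_2}:=\mathbb{I}-\pi_0$ on $\ket*{1}_\mathfrak{l}$ to obtain $\square_\mathfrak{g}\neg\ket*{0}^{t_2}_\mathfrak{l}$, and then extracts $\square_\mathfrak{g}1^{t_1}_\mathfrak{a}$ purely by \textsf{PC} and \textsf{K} from the contrapositive structure of $\varphi_1$. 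The paper's route is marginally cleaner in that it keeps the retrodiction entirely within propositional manipulation of the protocol formulas and defers any use of reflexivity to the \textsf{C}-reduction (Lemma~\ref{lm:2}); your route is equally valid but front-loads an appeal to \textsf{T} that is, strictly, already available.
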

\begin{proof}
Let $\hat{\mathcal{M}}^{FR}=\ev{\mathcal{M}^{FR}, \hat{w}}$ and $\hat{w}\in V(1^{t_1}_\mathfrak{a})\cap V(1^{t_2}_\mathfrak{g})\cap V(\text{ok}^{t_3}_\mathfrak{c})\cap V(\text{ok}^{t_4}_\mathfrak{d})$ as required. 

(i) Based on $\hat{w}\in V(1^{t_1}_\mathfrak{a})$, we get $\hat{\mathcal{M}}^{FR}\models\kket{1}^{t_1}_{\mathfrak{a}_\mathfrak{r}}$ from \textsf{MP} on $\varphi_1$. Consider the operator $\pi_{1_\mathfrak{a}}^{t_1}=\pi_1$. Since ${}_\mathfrak{r}\!\ev*{\pi_{1_\mathfrak{a}}^{t_1}}{1}_\mathfrak{r}=1$, we may infer from \eqref{eq:Q} that $\hat{\mathcal{M}}^{FR}\models \square_\mathfrak{a}1_{\mathfrak{a}}^{t_1}$. Since any of the conjuncts from $\varphi_1$ is valid in all $w\in W$, this, together with \textsf{K}, establishes that $\hat{\mathcal{M}}^{FR}\models\square_\mathfrak{a}\ket*{+}^{t'}_\mathfrak{l}$, for $t'$ right after $t_1$, when the preparation is finished. Similarly, any of the conjuncts in $\varphi_0$ holds in all $w\in W$, so that $\hat{\mathcal{M}}^{FR}\models\square_\mathfrak{a}\ket*{0}^{t'}_\mathfrak{g}$. Hence, in any $w$ s.t.\ $\hat{w}R_\mathfrak{a}w$, it holds that $\ev{\mathcal{M}^{FR}, w}\models \ket*{+}^{t'}_\mathfrak{l}\wedge\ket*{0}^{t'}_\mathfrak{g}$, which by \eqref{eq:tensor} means that $\ev{\mathcal{M}^{FR}, w}\models \ket*{+}^{t'}_\mathfrak{l}\ket*{0}^{t'}_\mathfrak{g}$, and therefore $\hat{\mathcal{M}}^{FR}\models\square_\mathfrak{a}\ket*{+}^{t'}_\mathfrak{l}\ket*{0}^{t'}_\mathfrak{g}$. Based on this knowledge, we can assume $\mathfrak{a}$ to be in a corresponding state $\kket{+,0}_{\mathfrak{a}_\mathfrak{lg}}^{t'}$. By \textsf{U} and $\upsilon$, $\mathfrak{a}$ can now appeal to $U_\mathfrak{a}$ to obtain ${}_\mathfrak{l}\!\bra*{+} {}_\mathfrak{g}\!\bra*{0}\pi_{\text{fail}_\mathfrak{d}}^{t_4}\ket*{+}_\mathfrak{l}\ket*{0}_\mathfrak{g}=1$, as is shown in the \hyperref[sec:append]{Appendix}, \hyperref[thm:Amand]{A 4}, where $\pi_{\text{fail}_\mathfrak{d}}^{t_4}=U^{\dagger}_\mathfrak{a}\pi_\text{fail}U_\mathfrak{a}$. Therefore, by \eqref{eq:Q}, $\hat{\mathcal{M}}^{FR}\models\square_\mathfrak{a}\text{fail}_\mathfrak{d}^{t_4}$.

(ii) By assumption $\hat{w} \in  V(1^{t_2}_\mathfrak{g})$. So by $\varphi_2$, $\hat{w}\in V(\kket{1}^{t_2}_{\mathfrak{g}_\mathfrak{l}})$. Since ${}_\mathfrak{l}\!\ev*{\pi_{\neg 0_\mathfrak{l}}^{t_2}}{1}_\mathfrak{l}=1$, where $\pi_{\neg 0_\mathfrak{l}}^{t_2}:=\mathbb{I}-\pi_0$, \eqref{eq:Q} now implies that $\hat{\mathcal{M}}^{FR}\models\square_\mathfrak{g}\neg\ket*{0}^{t_2}_\mathfrak{l}$. It is easy to prove that $\varphi_1$ \textsf{PC}-implies $\neg\ket*{0}^{t_2}_\mathfrak{l}\supset1_\mathfrak{a}^{t_1}$, and therefore, by \textsf{MP}, ($\mathfrak{g}$'s knowledge that) $\varphi_1$, and \textsf{K}, $\hat{\mathcal{M}}^{FR}\models\square_\mathfrak{g}1_\mathfrak{a}^{t_1}$. It similarly follows that $\hat{\mathcal{M}}^{FR}\models\square_\mathfrak{g}\kket{1}^{t_1}_{\mathfrak{a}_\mathfrak{r}}$. Then in any $w\in W$ s.t.\ $\hat{w}R_\mathfrak{g}w$, $\ev*{\mathcal{M}^{FR}, w}\models\kket{1}^{t_1}_{\mathfrak{a}_\mathfrak{r}}$. As shown in (i), this suffices to establish that $\ev*{\mathcal{M}^{FR}, w}\models\square_\mathfrak{a}1^{t_1}_{\mathfrak{a}}$. Hence, $\hat{\mathcal{M}}^{FR}\models\square_\mathfrak{g}\square_\mathfrak{a}1_\mathfrak{a}^{t_1}$. 

Assume now that $\hat{\mathcal{M}}^{FR}\models\Diamond_\mathfrak{g}\neg\square_\mathfrak{a}\text{fail}_\mathfrak{d}^{t_4}$. Then there is a $w$ s.t.\ $\hat{w}R_\mathfrak{g}w$ and $\ev*{\mathcal{M}^{FR}, w}\models\neg\square_\mathfrak{a}\text{fail}_\mathfrak{d}^{t_4}$. However,  $\hat{\mathcal{M}}^{FR}\models\square_\mathfrak{g}\square_\mathfrak{a}(\varphi_{0-5}\wedge\upsilon)$ and, as we have just seen, $\hat{\mathcal{M}}^{FR}\models\square_\mathfrak{g}\square_\mathfrak{a}1_\mathfrak{a}^{t_1}$. So for all $w'$ s.t.\ $\hat{w}R_\mathfrak{g}w'$, it holds that $\ev*{\mathcal{M}^{FR}, w'}\models\square_\mathfrak{a}1_\mathfrak{a}^{t_1}\wedge\square_\mathfrak{a}(\varphi_{0-5}\wedge\upsilon)$, in particular also for $w'=w$. However, as show in (i), this suffices to establish $\ev*{\mathcal{M}^{FR}, w}\models\square_\mathfrak{a}\text{fail}_\mathfrak{d}^{t_4}$. Contradiction. Hence $\hat{\mathcal{M}}^{FR}\models\square_\mathfrak{g}\square_\mathfrak{a}\text{fail}_\mathfrak{d}^{t_4}$, and Lm.\ \ref{lm:2} gives us $\hat{\mathcal{M}}^{FR}\models\square_\mathfrak{g}\text{fail}_\mathfrak{d}^{t_4}$.

(iii) From the fact that $\kket{\text{init}}^0_{\mathfrak{c}_\mathfrak{ralg}}$ and ${}_{\mathfrak{ralg}}\!\bra{\text{init}}\Pi_{\neg(\text{ok}_\mathfrak{c}\wedge 0_\mathfrak{g})}\ket*{\text{init}}_\mathfrak{ralg}=1$ (cf.\ \hyperref[sec:append]{Appendix}, \hyperref[thm:OkZero]{A 1}), with $\Pi_{\neg(\text{ok}_\mathfrak{c}\wedge 0_\mathfrak{g})}$ defined as in \eqref{eq:PiokPi0}, we immediately get that $\hat{\mathcal{M}}^{FR}\models\square_\mathfrak{c}(\text{ok}_\mathfrak{c}^{t_3}\supset \neg 0_{\mathfrak{g}}^{t_2})$. By $\hat{w}\in V(\text{ok}^{t_3}_\mathfrak{c})$ and $\varphi_3$, we get $\hat{w}\in V(\kket{\text{ok}}_{\mathfrak{c}_\mathfrak{ra}}^{t_3})$. $_\mathfrak{ra}\!\ev*{\pi_{\text{ok}_\mathfrak{c}}^{t_3}}{\text{ok}}_\mathfrak{ra}=1$, where $\pi_{\text{ok}_\mathfrak{c}}^{t_3}=\pi_{\text{ok}}$, gives us $\hat{\mathcal{M}}^{FR}\models\square_\mathfrak{c}\text{ok}_\mathfrak{c}^{t_3}$. So by \textsf{K}, $\hat{\mathcal{M}}^{FR}\models\square_\mathfrak{c}\neg 0_{\mathfrak{g}}^{t_2}$. In turn, \textsf{PC}, ($\mathfrak{c}$'s knowledge that) $\varphi_2$, and \textsf{K} establish that $\hat{\mathcal{M}}^{FR}\models\square_\mathfrak{c}1_{\mathfrak{g}}^{t_2}$. 

From ($\mathfrak{c}$'s knowledge of) $\varphi_2$, and \textsf{K}, we immediately also get $\hat{\mathcal{M}}^{FR}\models\square_\mathfrak{c}\kket{1}^{t_2}_{\mathfrak{g}_\mathfrak{l}}$. Thus, for any $w\in W$ s.t.\ $\hat{w}R_\mathfrak{c}w$, it holds that $w\in V(\kket{1}^{t_2}_{\mathfrak{g}_\mathfrak{l}})$. Hence, as shown in (ii), it follows that $\ev*{\mathcal{M}^{FR}, w}\models\square_\mathfrak{g}\neg\ket*{0}^{t_2}_\mathfrak{l}$ in any such $w$, and as is shown there as well, $\ev*{\mathcal{M}^{FR}, w'}\models\square_\mathfrak{a}1_\mathfrak{a}^{t_1}$ for any $w'\in W$ s.t.\ $wR_\mathfrak{g} w'$. So $\ev*{\mathcal{M}^{FR}, w}\models\square_\mathfrak{g}\square_\mathfrak{a}1_\mathfrak{a}^{t_1}$. Since $w$ is otherwise arbitrary, we immediately get $\hat{\mathcal{M}}^{FR}\models\square_\mathfrak{c}\square_\mathfrak{g}\square_\mathfrak{a}1_\mathfrak{a}^{t_1}$. By Lm.\ \ref{lm:2}, we can reduce this to $\hat{\mathcal{M}}^{FR}\models\square_\mathfrak{c}\square_\mathfrak{a}1_\mathfrak{a}^{t_1}$, and the proof that $\hat{\mathcal{M}}^{FR}\models\square_\mathfrak{c}\text{fail}_\mathfrak{d}^{t_4}$ is immediate. 

(iv) Finally, from the common common knowledge of $\varphi_0$, we get that David will not only be in state $\kket{\text{init}}_{\mathfrak{d}_\mathfrak{ralg}}$ at $t=0$, but also in $\kkket{\text{init}}_{\mathfrak{d}_{\mathfrak{x}_\mathfrak{ralg}}}$ w.r.t.\ any $\mathfrak{x}\in A$. So by the remark on nested box-formulas and (iii), we get that $\hat{\mathcal{M}}^{FR}\models\square_\mathfrak{d}\square_\mathfrak{c}(\text{ok}_\mathfrak{c}^{t_3}\supset\neg 0_{\mathfrak{g}}^{t_2})$.  Moreover, from $\varphi_5$ we get that $\kkket{\text{ok}}^5_{\mathfrak{d}_{\mathfrak{c}_\mathfrak{ra}}}$, and so in basically the same way, we obtain $\hat{\mathcal{M}}^{FR}\models\square_\mathfrak{d}\square_\mathfrak{c}\text{ok}_\mathfrak{c}^{t_3}$. Thus, by \textsf{K}, in any $w$ s.t.\ $\hat{w}R_\mathfrak{d}w$, it holds that $\ev*{\mathcal{M}^{FR}, w}\models\square_\mathfrak{c}\neg 0_{\mathfrak{g}}^{t_2}$. Hence $\hat{\mathcal{M}}^{FR}\models\square_\mathfrak{d}\square_\mathfrak{c}\neg 0_{\mathfrak{g}}^{t_2}$ and by Lm.\ \ref{lm:2} $\hat{\mathcal{M}}^{FR}\models\square_\mathfrak{d}\neg 0_{\mathfrak{g}}^{t_2}$. The proof that $\hat{\mathcal{M}}^{FR}\models\square_\mathfrak{d}\text{fail}_{\mathfrak{d}}^{t_4}$ is now immediate.  

However, $\hat{w}\in V(\text{ok}^{t_4}_\mathfrak{d})$ and so by $\varphi_4$, $\hat{w}\in V(\kket{\text{ok}}^{t_4}_{\mathfrak{d}_\mathfrak{lg}})$. Considering that $_\mathfrak{lg}\!\ev*{\pi_{\neg\text{fail}_\mathfrak{d}}^{t_4}}{\text{ok}}_\mathfrak{lg}=1$, where $\pi_{\neg\text{fail}_\mathfrak{d}}^{t_4}=\mathbb{I}-\pi_{\text{fail}}$, we get $\hat{\mathcal{M}}^{FR}\models\square_\mathfrak{d}\neg\text{fail}_\mathfrak{d}^{t_4}$. From Lm.\ \ref{lm:1}, we get a contradiction. \hfill $\blacksquare$
\end{proof}

The proof of Thm.\ FR has the form of a reductio, leading from the assumption of a world $\hat{w}$ to a contradiction. Moreover, it is tempting to blame \emph{reflexivity}, as it figured crucially in the proof of Lm.\ \ref{lm:2}, and we can easily provide a counterexample to Lm.\ \ref{lm:2} in an serial, transitive, Euclidean frame. Consider the model presented in Fig.\ \ref{fig:no2}: Every agent `can see' some world from any world (seriality), if two (not necessarily distinct) worlds are seen by an agent from some world, they see each other (Euclideanness), and worlds that can be seen from seen worlds are also seen (transitivity). Yet from every world that $\mathfrak{x}$ sees from $w_{\neg\varphi}$, $\mathfrak{y}$ only sees worlds in which $\varphi$ is true ($\square_\mathfrak{x}\square_\mathfrak{y}\varphi$). But in $w_{\neg\varphi}$, $\varphi$ is false, and $\mathfrak{x}$ can see this ($\Diamond_\mathfrak{x}\neg\varphi$). 
\begin{figure}[htb!]
\begin{center}
\includegraphics[scale=0.3]{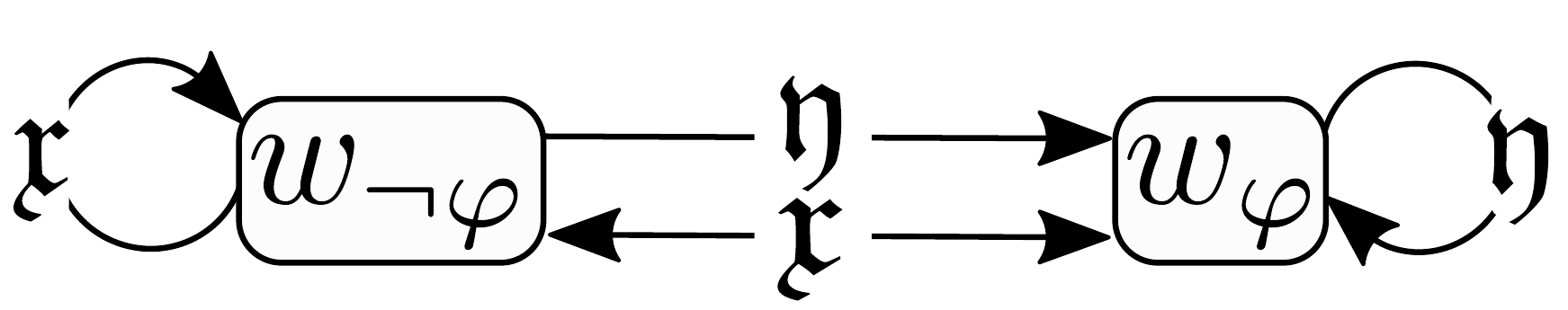}
\captionsetup{justification=centering, labelfont=bf, width=0.9\textwidth}
\caption{Serial, transitive, Euclidean counterexample to Lm.\ \ref{lm:2}.}
\label{fig:no2}
		\end{center}
\end{figure}

However, consider the following, stronger theorem: 

\begin{theorem}[FR*]Given conditions \eqref{eq:Q}, \textsf{U}, as well as a serial Kripke model $\mathcal{M}^{FR}=\ev{W, R, V}$ over $A=\qty{\mathfrak{a}, \mathfrak{c}, \mathfrak{d}, \mathfrak{g}}$ and with $P$ containing all value statements and state assignments in the \textsf{FR-Protocol}, such that all $w\in W$ satisfy $\varphi_{0-5}$ and $\upsilon$. Then there is a $\hat{w}\in W$ such that $\hat{w}\in V(1^{t_1}_\mathfrak{a})\cap V(1^{t_2}_\mathfrak{g})\cap V(\text{ok}^{t_3}_\mathfrak{c})\cap V(\text{ok}^{t_4}_\mathfrak{d})$ and $\ev*{\mathcal{M}^{FR}, \hat{w}}\models\perp$.
\end{theorem}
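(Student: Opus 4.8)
The plan is to keep the architecture of Thm.~FR --- a reductio that manufactures, from a model of the hypotheses, a world at which David's certainties collapse --- while making the two adjustments forced by weakening reflexive to merely serial frames. First, the world $\hat w$ carrying all four outcomes is no longer \emph{given}: it must be \emph{extracted} from the hypotheses. Second, Lm.~\ref{lm:2} (rule \textsf{C}) is unavailable in serial frames, as Fig.~\ref{fig:no2} shows, so every place where the reflexive proof collapses a nested box $\square_\mathfrak{x}\square_\mathfrak{y}\varphi$ to $\square_\mathfrak{x}\varphi$ must be handled otherwise. The key observation for the second point is that the \emph{particular} value-propositions occurring in the argument are, by $\varphi_{0-5}$ and \eqref{eq:Q}, so tightly tied to agents' states that being certain of the ``wrong'' outcome of a measurement is already contradictory in \emph{serial} frames (it yields $\square_\mathfrak{x}\perp$, which \textsf{D} bars); this restricted consistency principle turns out to be exactly what is needed in place of full factivity.

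For the extraction of $\hat w$: since $\varphi_0$ belongs to the common common knowledge, every world carries $\kket{\text{init}}^0_{\mathfrak{c}_\mathfrak{ralg}}$ (Chris assigns the global initial state). Because ${}_{\mathfrak{ralg}}\!\bra{\text{init}}\pi_{\text{ok}}\otimes\pi_{\text{ok}}\ket*{\text{init}}_{\mathfrak{ralg}}=1/12\neq 0$, \eqref{eq:Q} gives $\neg\square_\mathfrak{c}\neg(\text{ok}_\mathfrak{c}^{t_3}\wedge\text{ok}_\mathfrak{d}^{t_4})$, i.e.\ $\Diamond_\mathfrak{c}(\text{ok}_\mathfrak{c}^{t_3}\wedge\text{ok}_\mathfrak{d}^{t_4})$, at any world; let $\hat w$ be a $\mathfrak{c}$-accessible witness. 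From ${}_{\mathfrak{ralg}}\!\bra{\text{init}}\Pi_{\neg(\text{ok}_\mathfrak{c}\wedge 0_\mathfrak{g})}\ket*{\text{init}}_{\mathfrak{ralg}}=1$ (\hyperref[sec:append]{Appendix}, \hyperref[thm:OkZero]{A 1}) and the analogous vanishing of the Born probability of $1_\mathfrak{g}^{t_2}\wedge 0_\mathfrak{a}^{t_1}$ read off \eqref{eq:Chris}, \eqref{eq:Q} gives $\square_\mathfrak{c}\neg(\text{ok}_\mathfrak{c}^{t_3}\wedge 0_\mathfrak{g}^{t_2})$ and $\square_\mathfrak{c}\neg(1_\mathfrak{g}^{t_2}\wedge 0_\mathfrak{a}^{t_1})$ at the world $\hat w$ was taken from, hence both hold at $\hat w$; with $\text{ok}_\mathfrak{c}^{t_3}$ there and $\varphi_1,\varphi_2$ this yields $\hat w\in V(1_\mathfrak{a}^{t_1})\cap V(1_\mathfrak{g}^{t_2})\cap V(\text{ok}_\mathfrak{c}^{t_3})\cap V(\text{ok}_\mathfrak{d}^{t_4})$. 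Next I would establish, \emph{at every world}, the single-agent conditional certainty $\square_\mathfrak{d}(\text{ok}_\mathfrak{c}^{t_3}\supset\text{fail}_\mathfrak{d}^{t_4})$: every world also carries $\kket{\text{init}}^0_{\mathfrak{d}_\mathfrak{ralg}}$, so by \eqref{eq:Q} the three Born-probability-$0$ facts for $\text{ok}_\mathfrak{c}^{t_3}\wedge 0_\mathfrak{g}^{t_2}$, $1_\mathfrak{g}^{t_2}\wedge 0_\mathfrak{a}^{t_1}$ and $1_\mathfrak{a}^{t_1}\wedge\text{ok}_\mathfrak{d}^{t_4}$ (all from \hyperref[thm:OkZero]{A 1} / \eqref{eq:David}) deliver the corresponding $\square_\mathfrak{d}$-negations, and conjoining these with $\square_\mathfrak{d}\varphi_1,\square_\mathfrak{d}\varphi_2,\square_\mathfrak{d}\varphi_4$ and pushing a propositional tautology through \textsf{N}, \textsf{K}, \textsf{MP} gives the claim. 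Crucially David does this by computing from the initial state alone; he does \emph{not} reconstruct Gustavo's or Amanda's epistemic states, so no nested boxes through those two agents ever arise.

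It remains to feed David the antecedent $\text{ok}_\mathfrak{c}^{t_3}$. At $\hat w$ the formula $\varphi_5$ yields $\kkket{\text{ok}}^5_{\mathfrak{d}_{\mathfrak{c}_\mathfrak{ra}}}$, so by the remark on nested box-formulas $\hat w\models\square_\mathfrak{d}\square_\mathfrak{c}\text{ok}_\mathfrak{c}^{t_3}$. Rather than collapsing this via Lm.~\ref{lm:2}, I would argue: for any $\mathfrak{d}$-successor $w$ of $\hat w$ we have $w\models\square_\mathfrak{c}\text{ok}_\mathfrak{c}^{t_3}$, while $\varphi_3$ forces $\text{ok}_\mathfrak{c}^{t_3}\vee\text{fail}_\mathfrak{c}^{t_3}$ at $w$; in the second case $\varphi_3$ and \eqref{eq:Q} give $w\models\square_\mathfrak{c}\neg\text{ok}_\mathfrak{c}^{t_3}$, so $w\models\square_\mathfrak{c}\perp$, contradicting \textsf{D}. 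Hence $w\models\text{ok}_\mathfrak{c}^{t_3}$, and since $w$ was arbitrary (and exists by seriality), $\hat w\models\square_\mathfrak{d}\text{ok}_\mathfrak{c}^{t_3}$; with the preceding paragraph and \textsf{K}, $\hat w\models\square_\mathfrak{d}\text{fail}_\mathfrak{d}^{t_4}$. Since $\hat w\models\text{ok}_\mathfrak{d}^{t_4}$, $\varphi_4$ and \eqref{eq:Q} also give $\hat w\models\square_\mathfrak{d}\neg\text{fail}_\mathfrak{d}^{t_4}$, and Lm.~\ref{lm:1} --- which, as remarked, needs only \textsf{D} --- yields $\hat w\models\perp$, the world asserted by the theorem.

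The main obstacle, and the only essential novelty over the reflexive proof, is the step just described: where Thm.~FR uses factivity (through \textsf{C}) to turn ``$\mathfrak{d}$ is certain $\mathfrak{c}$ is certain of $\mathfrak{c}$'s outcome'' into ``$\mathfrak{d}$ is certain of $\mathfrak{c}$'s outcome'', I would substitute the weaker fact that, for a measurement outcome as rigidly pinned to an agent's state as $\varphi_{0-5}$ and \eqref{eq:Q} make it, $\square_\mathfrak{c}(\text{the outcome})$ together with its negation already produces $\square_\mathfrak{c}\perp$, which serial frames forbid. I expect everything else --- verifying the relevant Born probabilities, and checking that the ``agent $\mathfrak{x}$ assigns the global initial state'' formulas are part of the common common knowledge and hence hold at every world --- to be routine bookkeeping; the conceptual content lies entirely in isolating this restricted consistency principle and seeing that the argument never needs more.
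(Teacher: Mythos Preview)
Your proof is correct but follows a genuinely different route from the paper's.

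For the existence of $\hat w$, the paper computes a single Heisenberg operator $\Pi_{\neg\varphi_{FR}}=\mathbb{I}_4-\Pi_\text{ok}^{t_4}\Pi_\text{ok}^{t_3}\Pi_1^{t_2}\Pi_1^{t_1}$ and shows its expectation in $\ket{\text{init}}_\mathfrak{ralg}$ is $11/12\neq 1$ (\hyperref[thm:JointOcc]{A~2}), yielding $\Diamond_\mathfrak{d}\varphi_{FR}$ directly at every world; you instead first get $\Diamond_\mathfrak{c}(\text{ok}_\mathfrak{c}\wedge\text{ok}_\mathfrak{d})$ and then pull $1_\mathfrak{g},1_\mathfrak{a}$ along via the probability-$0$ constraints. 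Both work, but note the paper already provides \hyperref[thm:AmandGust]{A~3} for your $\neg(1_\mathfrak{g}\wedge 0_\mathfrak{a})$ step.

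The substantive difference is in the contradiction. The paper does \emph{not} avoid Amanda: it uses $\hat w\in V(\kket{+,0}^{t'}_{\mathfrak{a}_\mathfrak{lg}})$ and \eqref{eq:Q} to obtain $\hat{\mathcal M}^{FR}\models\square_\mathfrak{a}\neg\text{ok}_\mathfrak{d}^{t_4}$, takes a serial $\mathfrak a$-successor $w$ (where then $\text{fail}_\mathfrak{d}$, hence $\square_\mathfrak{d}\neg\text{ok}_\mathfrak{d}$), and closes the loop by invoking that $\Diamond_\mathfrak{d}\varphi_{FR}$ holds at \emph{every} world, in particular at $w$---so some $\mathfrak d$-successor of $w$ carries $\text{ok}_\mathfrak{d}$, contradicting $\square_\mathfrak{d}\neg\text{ok}_\mathfrak{d}$. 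Your argument instead lets David do all the quantum reasoning from $\kket{\text{init}}^0_{\mathfrak{d}_\mathfrak{ralg}}$ himself, obtaining $\square_\mathfrak{d}(\text{ok}_\mathfrak{c}\supset\text{fail}_\mathfrak{d})$ without ever nesting through $\mathfrak a$ or $\mathfrak g$; the only nested box, $\square_\mathfrak{d}\square_\mathfrak{c}\text{ok}_\mathfrak{c}$, you collapse by your ``restricted consistency'' observation (fail$_\mathfrak{c}$ at a $\mathfrak d$-successor would give $\square_\mathfrak{c}\!\perp$, barred by \textsf{D}). What this buys you is a cleaner localization of the contradiction at $\hat w$ via Lm.~\ref{lm:1}, and it makes explicit that the serial argument needs neither $\mathfrak a$'s nor $\mathfrak g$'s boxes at all. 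What the paper's route buys is brevity: once the global $\Diamond_\mathfrak{d}\varphi_{FR}$ is in hand, two lines through Amanda suffice, with no need for your chained propositional implication inside $\square_\mathfrak{d}$.
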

\begin{proof}
(I) Let $\varphi_{FR}\equiv 1^{t_1}_\mathfrak{a}\wedge 1^{t_2}_\mathfrak{g}\wedge \text{ok}^{t_3}_\mathfrak{c}\wedge \text{ok}^{t_4}_\mathfrak{d}$. By the agreed upon $\varphi_0$ and \eqref{eq:tensor}, $\ket*{\text{init}}^{0}_\mathfrak{ralg}=\ket*{\text{init}}_\mathfrak{r}\bigotimes_{j=\mathfrak{a},\mathfrak{l},\mathfrak{g}}\ket*{0}_j$. This is known by Chris and David, so $\kket{\text{init}}_{\mathfrak{c}/\mathfrak{d}_\mathfrak{ralg}}^0$ holds in any $w\in W$. Consider now the Heisenberg operator $\Pi_{\neg\varphi_{FR}}=\mathbb{I}_4 - \Pi_\text{ok}^{t_4}\Pi_\text{ok}^{t_3}\Pi_1^{t_2}\Pi_1^{t_1}$, defined in terms of projectors on $\mathcal{H}_\mathfrak{ralg}$ and evolved under unitaries $U_{i}$ from the \hyperref[sec:append]{Appendix}, \hyperref[thm:OkZero]{A 1}. As shown in the \hyperref[sec:append]{Appendix}, \hyperref[thm:JointOcc]{A 2}, ${}_\mathfrak{ralg}\!\ev{\Pi_{\neg\varphi_{FR}}}{\text{init}}_\mathfrak{ralg}=1-{}_\mathfrak{ralg}\!\ev{\Pi_\text{ok}^{t_4}\Pi_\text{ok}^{t_3}\Pi_1^{t_2}\Pi_1^{t_1}}{\text{init}}_\mathfrak{ralg} = 11/12\neq 1$. From \eqref{eq:Q} it now follows that $\mathcal{M}^{FR}\models\neg\square_\mathfrak{x}\neg\psi_{FR}$ for $\mathfrak{x}\in \qty{\mathfrak{c},\mathfrak{d}}$, who can avail themselves of the relevant unitaries on account of $\upsilon$. Hence, there is a $\hat{w}\in W\cap V(1^{t_1}_\mathfrak{a})\cap V(1^{t_2}_\mathfrak{g})\cap V(\text{ok}^{t_3}_\mathfrak{c})\cap V(\text{ok}^{t_4}_\mathfrak{d})$, accessible to $\mathfrak{c}, \mathfrak{d}$ from any $w\in W$.

(II) As in step (i) of Thm.\ FR, we get $\hat{w}\in V(\kket{+,0}_{\mathfrak{a}_\mathfrak{lg}}^{t'})$, and as shown in the \hyperref[sec:append]{Appendix}, \hyperref[thm:Amand]{A 4}, ${}_\mathfrak{l}\!\bra*{+} {}_\mathfrak{g}\!\bra*{0}\pi_{\neg\text{ok}_\mathfrak{d}}^{t_4}\ket*{+}_\mathfrak{l}\ket*{0}_\mathfrak{g}=1$, where $\pi_{\neg\text{ok}_\mathfrak{d}}^{t_4}=U^{\dagger}_\mathfrak{a}(\mathbb{I}_{2}-\pi_\text{ok})U_\mathfrak{a}$. Then $\hat{\mathcal{M}}^{FR}\models\square_\mathfrak{a}\neg\text{ok}_\mathfrak{d}^{t_4}$. Assume now that $W=\qty{\hat{w}}$. Then the seriality of the accessibilities collapses into reflexivity, and so $\hat{\mathcal{M}}^{FR}\models\neg\text{ok}_\mathfrak{d}^{t_4}$. Contradiction. Hence, there must be $w\neq \hat{w}$ in $W$, s.t.\ $\hat{w}R_\mathfrak{a} w$ and $\ev*{\mathcal{M}^{FR}, w}\models\neg\text{ok}_\mathfrak{d}^{t_4}$. 

However, on account of $\varphi_4$, we then get $w\in V(\text{fail}_\mathfrak{d}^{t_4})$ and $w\in V(\kket{\text{fail}}_{\mathfrak{d}_\mathfrak{lg}}^{t_4})$. So \eqref{eq:Q} on $_\mathfrak{lg}\!\ev*{\pi_{\neg\text{ok}_\mathfrak{d}}^{t_4}}{\text{fail}}_\mathfrak{lg}=1$ implies $\ev*{\mathcal{M}^{FR}, w}\models\square_\mathfrak{d}\neg\text{ok}^{t_4}_\mathfrak{d}$, where $\pi_{\neg\text{ok}_\mathfrak{d}}^{t_4}=\mathbb{I}-\pi_{\text{ok}}$. So $\ev*{\mathcal{M}^{FR}, w'}\models\neg\text{ok}^{t_4}_\mathfrak{d}$, for all $w'$ s.t.\ $wR_\mathfrak{d}w'$. But according to (I), $wR_\mathfrak{d}\hat{w}$. Hence,  $\hat{\mathcal{M}}^{FR}\models\text{ok}_\mathfrak{d}^{t_4}\wedge\neg\text{ok}^{t_4}_\mathfrak{d}$. \hfill $\blacksquare$
\end{proof}

The theorem is stronger since (I) \emph{proves} the existence of $\hat{w}$ and (II) shows that a contradiction follows already in serial Kripke-models. Ad hoc, this rules out a retreat to a doxastic interpretation of quantum probabilities as a quick fix. 

\section{Concerns and options}
In conjunction, Thm.s FR and FR* might indeed be seen as establishing the inadequacy of (epistemic/doxastic) modal logic in quantum settings. One might want to retreat to setting $R_\mathfrak{x}=\varnothing$ for all $\mathfrak{x}\in A$, as there would then be a Kripke model in which none of the two theorems follows. However, not only is \textbf{\textsf{K}} (in which this is possible) arguably too weak for modelling either knowledge or certainty: An empty access relation is incompatible with the non-vanishing probabilities implied by the \textsf{FR-Protocol} and condition \eqref{eq:Q}. What options are on the table for avoiding Nurgalieva and del Rio's conclusion that modal logic itself is the culprit?

One might react by assuming that quantum probabilities model something else than knowledge or certainty, maybe propensities. But this move would fail, for then probability-1 statements would model \emph{sure-fire} propensities and a correct application of QT would immediately \emph{give rise to} knowledge of outcomes.

Moreover, note that quantum probabilities as degrees of \emph{belief} are not confined to positions such as \emph{QBism} \citep{fuchs2014, mermin2014a, FMS2014} or related ones \citep{friederich2014, healey2017, boge2018}: several strategies for recovering the Born rule in many worlds interpretations have them quantify some kind of subjective uncertainty \citep[e.g.][]{saunders1998, sebens2016}. Similarly, known arguments for recovering the statistical content of QT in Bohmian approaches require the assumption that all particles in the universe be distributed according to the squared modulus of a global wave function $\Psi_0$ at some `initial time'. But Bohmian mechanics is perfectly deterministic, so the assumption of such a `distribution' amounts to ``providing a measure of subjective probability for the initial configuration'' \citep[45]{durr2012}. Hence, if we cannot interpret probability-1 assertions in QT either in terms of (or as giving rise to) knowledge or subjective certainty, this is bad news indeed.

Another response might be that an experimental realization of the \textsf{FR-Protocol} is inconceivable. For measurements involving human agents will typically not be of the ideal von Neumann-type, but rather have a non-negligible effect on the initial state of the measured object (human), i.e., proceed as $\ket{i}_\mathfrak{x}\ket{0}_\mathfrak{m}\mapsto\alpha_{if}\ket{f\neq i}_\mathfrak{x}\ket{i}_\mathfrak{m}$. However, the proof does not strictly require non-disturbing measurements, as inferences to pre-measurement states of agents and systems could still be drawn if the particular influence of the act of measuring was known. 

A more serious worry is that human agents are too complex to remain in \emph{coherent} states: repeated interactions with, say, air molecules and radiation will quickly decohere them into improper mixtures in a relevant basis.\footnote{Cf.\ \cite{schlosshauer2007} for an overview of decoherence theory.} In fact, Wigner himself later \cite{wigner1986} stepped back from his views on consciousness when he encountered Zeh's \cite{zeh1970} discovery of the decoherence mechanism.\footnote{Keep in mind, however, that decoherence \emph{by itself} does not solve the measurement problem, as so vividly pointed out by Bell \cite{bell1990}.}

These are clearly valid concerns, but one must be careful not to judge the experimental realizability of quantum peculiarities too quickly: Schr\"odinger \cite[848]{schrodinger}, for instance, suspected that a scenario of the EPR \citep[][]{epr1935} kind might be experimentally unrealisable; and today there is a flourishing field of Bell-type \citep{bell1964} experimentation.

Indeed, an experiment on the consequences of a related theorem by Brukner \cite{brukner} has recently been performed by Proietti et al. \cite{proietti}. However, Proietti et al. \cite[3]{proietti} ``define as observer any physical system that can extract information from another system by means of some interaction, and store that information in a physical memory'' and put \emph{photons} in the place of the two friends (Amanda and Gustavo) accordingly. Clearly, such an experiment has no bearing whatsoever on the issues discussed in this paper, as (neglecting panpsychism) photons are incapable of forming beliefs. Bottom line: at present, it is perfectly unclear whether the protocol could in principle ever be realized, but one should also not exclude this without a rigorous argument. 

What about principles \eqref{eq:Q}, \textsf{U}, or even \eqref{eq:tensor}? \eqref{eq:Q} was given a rather solid foundation on the basis of Halpern's work. However, in both proofs, we also proceeded \emph{from} knowledge/certainty \emph{to} a certain state of an agent, namely from $\square_\mathfrak{a}\ket*{+}^{t'}_\mathfrak{l}\ket*{0}^{t'}_\mathfrak{g}$ to $\kket{+,0}_{\mathfrak{a}_\mathfrak{lg}}^{t'}$. One might worry about the intrusion of circularity at this point, but this worry is easily dispersed: Condition \eqref{eq:Q} only applies in situations where the relevant state already pertains to the given agent and is otherwise silent. If we can, hence, establish some claim to knowledge/certainty of the relevant form independently of any direct quantum considerations, it seems perfectly sensible (almost undeniable) that an agent should be in a state that unambiguously indicates a certain state assignment. 

Similarly, denying \eqref{eq:tensor} is hardly feasible, for there does not seem to be anything better than product states \emph{within} the quantum formalism  for stating that one system is in one state and the other one in another. Note, moreover, that we have nowhere required that it always be possible to assert such a conjunction, so \eqref{eq:tensor} is quite weak. 

What about \textsf{U}? One way of denying \textsf{U} is Wigner's originally favored option: conscious cognitive agents are not \emph{just} physical systems, so we cannot always model them as evolving unitarily over time. This option introduces a serious mind-body dualism, a position not very prominent in present in philosophy of mind, for various convincing reasons. 

However, one need not go down this exact road in rejecting \textsf{U}. QBists or like interpreters of QT would, first off, certainly deny the adequacy of systems `being in states $\ket*{v}_\mathfrak{s}$', as quantum states are not seen as \emph{representational} in such interpretations: They need to be relativised to an \emph{assigner} therein, and one would have to introduce agent-dependent state assignments $\ket*{v}^{\mathfrak{x},t}_\mathfrak{s}$, meaning `$\mathfrak{x}$ assigns $\ket*{v}$ to $\mathfrak{s}$ at $t$'. 

This immediately obviates states of the form $\kket{v}_{\mathfrak{x}_\mathfrak{y}}$, but it does not block the two theorems: Given that she assigns a corresponding state to $\mathfrak{lg}$ at $t'$, Amanda could still evolve her Heisenberg operators unitarily and draw according inferences. Only if we denied, as suggested by Wigner, that QT could be applied in this way to other agents could we block the theorems. But with the representational function of quantum states discarded, the dualism hence invoked would be merely \emph{epistemic}: we need one set of rules for reasoning about what we construe as `inanimate matter' and another set for what we construe as `others'. None of this points \emph{directly} to a mind/ matter divide in the `real world', since our access to it is then remote, in part mediated by (non-representational) quantum states.

There are also some \emph{tacit} assumptions in both proofs, one being the establishment of claims to knowledge/ certainty of \emph{states}, not measurement results, on account of condition \eqref{eq:Q}. This move may seem innocent, but it is actually an invocation of the eigenvalue-eigenstate link \citep{fine1973}, as it proceeds directly from probabilities evaluated on Heisenberg operators to corresponding quantum states of systems. If this was denied, there would be no possibility for Amanda to infer anything about the state of $\mathfrak{lg}$ from contemplated measurements in the first place, and hence both Thm.\ FR and FR* would be blocked. 

This brings us to another tacit assumption, namely that merely contemplated, i.e., \emph{counterfactual} measurements can be treated on the same footing as actual ones. Indeed, Healey \cite[1577--9]{healey2018} argues that the proof \emph{erroneously} appeals to counterfactuals that concern dependencies between measurement-outcomes at different times and are maintained by an agent regardless of whether a measurement isolated from the measurements producing these outcomes was performed in between. 

In particular, Healey \cite[1586--8]{healey2018} targets a counterfactual leading from $\mathfrak{a}$'s measurement $1_\mathfrak{a}^{t_1}$ to fail$_\mathfrak{d}^{t_4}$, which is maintained even though $\mathfrak{c}$ measures $\mathfrak{ra}$ between the two events. Interestingly, the reason why Healey objects to this counterfactual is basically an acknowledgement of the stipulation $\mathfrak{s}\neq\mathfrak{x}$ in condition \textsf{U}: 
\begin{quote}
[$\mathfrak{a}$] is justified in using the state assignment [$\ket{+}_\mathfrak{l}^{t'}$] for the purpose of predicting the outcome of a measurement on [$\mathfrak{l}$] only where [$\mathfrak{l}$]'s correlations with other systems (encoded in an entangled state of a supersystem) may be neglected. \citep[1588]{healey2018}
\end{quote}

Hence, the fact that Chris and David can model Amanda herself as entangled with $\mathfrak{l}$ undercuts all inferences by Amanda involving the Heisenberg operators $\pi_{\text{fail}_\mathfrak{d}}^{t_4}, \pi_{\neg\text{ok}_\mathfrak{d}}^{t_4}$ in Thm.s FR and FR*.

This provides another, independent reason to be suspicious of  \textsf{U}, as Healey's arguments do not concern questions of consciousness but only \emph{correlations} that will be neglected if the locally established state ($\ket{+}_\mathfrak{l}^{t'}$) is taken to have general validity. Incidentally, a similar kind of argument could be made by a many worlds-theorist: The unitary $U_\mathfrak{a}$ has no meaning in a many worlds-view, as Amanda \emph{should} model her own evolution qua physical system by a unitary that affects her own state, and only this will provide the correct correlations that she, qua conscious being, will encounter after branching. Similarly, Bohmians could argue that Amanda would compute the wrong correlations between particles forming her brain and other systems if she used the local unitary $U_\mathfrak{a}$. Hence, \textsf{U}, as stated here and used by Frauchiger and Renner seems objectionable on several grounds.

\section{Conclusions}
This paper offered a thorough analysis of the theorem by Frauchiger and Renner within epistemic system \textsf{\textbf{T}} and provided an additional theorem compatible with a doxastic reading of quantum probability-1. Together, these theorems (FR, FR*) come dangerously close to justifying Nurgalieva and del Rio's assessment that modal logic might be inadequate for reasoning about credences or knowledge in quantum settings.

The subsequent discussion has shown, though, that there are several reasons for rejecting condition \textsf{U}, tacitly applied by Frauchiger and Renner and made explicit by Nurgalieva and del Rio and here: Many worlds-theorists, Bohmians, as well as Healeyan pragmatists should have reservations about the stipulation $\mathfrak{s}\neq \mathfrak{x}$ (more frankly: the isometries used by Frauchiger and Renner). 

Finally, the absence of iterated certainties of different agents in Thm.\ FR* defies Brukner's \citep[8--9]{brukner} assessment that:
\begin{quote}
the theorem by Frauchiger and Renner [...] points to the necessity to differentiate between ones' knowledge about direct observations and ones' knowledge about others' knowledge that is compatible with physical theories. It is likely that understanding this difference will be an important ingredient in further development of the method of Bayesian inference in situations as in the Wigner-friend experiment.
\end{quote}

However, as we have seen, QBist or like interpreters of QT could deny the applicability of quantum states and unitary QT to other epistemic  agents in the first place, without buying into Wignerian mind-body dualism. This would block Amanda's inferences in both theorems, and Brukner's remark could be maintained with a slight modification: The two theorems point to the necessity to differentiate between `states of matter' and `mental states of other agents', as conceived from the point of view of the agent using QT to make headway in her experienced environment.

\section*{Appendix}\label{sec:append}
\begin{A}\label{thm:OkZero}We show that ${}_{\mathfrak{ralg}}\!\bra{\text{init}}\Pi_{\neg(\text{ok}_\mathfrak{c} \wedge 0_\mathfrak{g})}\ket*{\text{init}}_\mathfrak{ralg}=1-0=1.$ For simplicity, we suppress system-indices where convenient and stick to the order $\mathfrak{ralg}$ for tensored quantum states. 

For reference, we note the unitaries that effect the state-transitions within steps 1 and 2 as given on the joint Hilbert space $\mathcal{H}_\mathfrak{ralg}$: 
\begin{align*}
U_{t_1} = & \pi_0\otimes\mathbb{I}_3 + \pi_1\otimes\sigma_x\otimes\mathbb{I}_2,\\
U_{t'} = & \pi_0\otimes\mathbb{I}_3 + \pi_1\otimes\mathbb{I} \otimes \sqrt{1/2}\left(\sigma_x + \sigma_z\right)\otimes\mathbb{I}, \\
U_{t_2} = & \pi_0\otimes\mathbb{I}_3 + \pi_1\otimes\mathbb{I}\otimes(\pi_0\otimes\mathbb{I} + \pi_1\otimes\sigma_x).
\end{align*}
Here, $\sigma_x = \dyad*{0}{1} +\dyad*{1}{0}$ and $\sigma_z= \dyad*{0}{0}-\dyad*{1}{1}$.\footnote{Recall also from the introduction that $\mathbb{I}$ acts on a single-system Hilbert space, $\mathbb{I}_2=\mathbb{I}\otimes\mathbb{I}$, $\mathbb{I}_3=\mathbb{I}_2\otimes\mathbb{I}$, etc.} In the Schr\"odinger picture, $U_{t_1}$ models the coupling of $\mathfrak{r}$ and $\mathfrak{a}$ in $\ket*{\text{init}}_\mathfrak{ralg}$ upon Amanda's first measurement, $U_{t'}$ models that both couple to $\mathfrak{l}$ in Amanda's preparation immediately after that, and $U_{t_2}$ $\mathfrak{g}$'s coupling with $\mathfrak{l}$ in his subsequent measurement. The reader may verify herself that these give the correct states and are, in fact, all not just unitary ($U_{i}^\dagger U_{i}= U_{i} U_{i}^\dagger =\mathbb{I}$) but also self-adjoint ($U_{i}^\dagger=U_{i}$). Note that we generally consider the evolution of local projectors extended to a larger space ($\pi\mapsto\Pi:=\mathbb{I}\otimes\ldots\otimes\mathbb{I}\otimes\pi\otimes\mathbb{I}\otimes\ldots$).

From  \eqref{eq:PiokPi0} it is straightforward to see that $\Pi_{\text{ok}}^{t_3}\Pi_{0}^{t_2}$ reduces to $\bar{U}^\dagger(\pi_\text{ok}\otimes\mathbb{I}_2)U_{t_2}(\mathbb{I}_2\otimes\pi_{0}\otimes\mathbb{I})U$. Hence  
\begin{align*}
&{}_{\mathfrak{ralg}}\!\bra{\text{init}}\Pi_{\text{ok}}^{t_3}\Pi_{0}^{t_2}\ket*{\text{init}}_\mathfrak{ralg}=\\
=&{}_{\mathfrak{ralg}}\!\bra{\Psi}(\pi_\text{ok}\otimes\mathbb{I}_2)U_{t_2}(\mathbb{I}_2\otimes\pi_{0}\otimes\mathbb{I})(\sqrt{1/3}\ket*{0}\ket*{0}\ket*{0}\ket*{0}+\sqrt{1/3}\ket*{1}\ket*{1}\ket*{0}\ket*{0}+ \sqrt{1/3}\ket*{1}\ket*{1}\ket*{1}\ket*{0})\\
=&{}_{\mathfrak{ralg}}\!\bra{\Psi}(\pi_\text{ok}\otimes\mathbb{I}_2)U_{t_2}(\sqrt{1/3}\ket*{0}\ket*{0}\ket*{0}\ket*{0}+\sqrt{1/3}\ket*{1}\ket*{1}\ket*{0}\ket*{0}) \\
=&{}_{\mathfrak{ralg}}\!\bra{\Psi}(\pi_\text{ok}\otimes\mathbb{I}_2)(\sqrt{1/3}\ket*{0}\ket*{0}\ket*{0}\ket*{0}+\sqrt{1/3}\ket*{1}\ket*{1}\ket*{0}\ket*{0})
\end{align*}
Since 
\begin{equation*}
\pi_\text{ok}=\dyad*{\text{ok}} = (1/2)(\dyad*{0}\otimes\dyad*{0} -\dyad*{0}{1}\otimes\dyad*{0}{1} + \dyad*{1}\otimes\dyad*{1} -\dyad*{1}{0}\otimes\dyad*{1}{0}),
\end{equation*}
it is easy to verify that the state to the right is an eigenvector of  $\pi_\text{ok}\otimes \mathbb{I}_2$ with eigenvalue zero, so we immediately get
\begin{equation*}
{}_{\mathfrak{ralg}}\!\bra{\text{init}}\Pi_{\text{ok}}^{t_3}\Pi_{0}^{t_2}\ket*{\text{init}}_\mathfrak{ralg}= 0.
\end{equation*}
\hfill $\blacksquare$
\end{A}
\begin{A}\label{thm:JointOcc}
We show that $1-{}_\mathfrak{ralg}\!\ev{\Pi_\text{ok}^{t_4}\Pi_\text{ok}^{t_3}\Pi_1^{t_2}\Pi_1^{t_1}}{\text{init}}_\mathfrak{ralg} = 1$. Until $\mathfrak{a}$'s measurement is completed at $t_1$, the joint state of $\mathfrak{r}$ and $\mathfrak{a}$'s memory will evolve trivially, whence 
\begin{equation*}
\Pi_1^{t_1} :=  \mathbb{I}_4(\pi_1\otimes\mathbb{I}_{3})\mathbb{I}_4 =\pi_1\otimes\mathbb{I}_{3}.
\end{equation*}
After that, the global state $\ket*{\text{init}}_\mathfrak{ralg}$ will evolve under $U_{t_1}$ and $U_{t'}$ in the Schr\"odinger picture, as described in \hyperref[thm:OKZero]{A 1}. Accordingly, 
\begin{equation*}\label{eq:P2P1}
\Pi_1^{t_2} :=  U_{t_1} U_{t'}(\mathbb{I}_2\otimes\pi_1\otimes\mathbb{I})U_{t'} U_{t_1} 
\end{equation*}

As for $\Pi_\text{ok}^{t_3}$ and $\Pi_\text{ok}^{t_4}$, each of  these needs to be evolved by all unitaries $U_{i}$, whereafter the evolution is again considered trivial on all spaces. Since the term $U_{t_2}U_{t'}U_{t_1}U_{t_1}U_{t'}U_{t_2}$ in the middle of $\Pi_\text{ok}^{t_4}\Pi_\text{ok}^{t_3}$ gives the identity and the untransformed projectors will be of the form $\mathbb{I}_2\otimes\pi_\text{ok}, \pi_\text{ok}\otimes\mathbb{I}_2$, we only need to consider 
\begin{equation*}\label{eq:PokPok}
U_{t_1}U_{t'}U_{t_2}\pi_\text{ok}\otimes\pi_\text{ok}U_{t_2}U_{t'}U_{t_1}.
\end{equation*}

Now we can let $U_{t_1}U_{t'}U_{t_2}$ act on ${}_\mathfrak{ralg}\!\bra*{\text{init}}$ to obtain ${}_\mathfrak{ralg}\!\bra*{\Psi}$, as in eqs.\ \eqref{eq:Chris} and \eqref{eq:David}. Moreover, we find that $U_{t'}U_{t_1}U_{t_1}U_{t'}=\mathbb{I}_4$ in between  $\Pi_\text{ok}^{t_4}\Pi_\text{ok}^{t_3}$ and $\Pi_1^{t_2}\Pi_1^{t_1}$. Therefore, we need to evaluate 
\begin{align*}
&{}_\mathfrak{ralg}\!\bra*{\Psi}(\pi_\text{ok}\otimes\pi_\text{ok})U_{t_2}(\mathbb{I}_2\otimes\pi_1\otimes\mathbb{I})U_{t'}U_{t_1}(\pi_1\otimes\mathbb{I}_{3})\ket*{\text{init}}_\mathfrak{ralg} = \\
= & \sqrt{1/12}\bra*{\text{ok}}\bra*{\text{ok}}U_{t_2}(\mathbb{I}_2\otimes\pi_1\otimes\mathbb{I})U_{t'}U_{t_1}(\pi_1\otimes\mathbb{I}_{3})\ket*{\text{init}}_\mathfrak{ralg} = \\
=&\sqrt{1/24}\bra*{\text{ok}}\bra*{\text{ok}}U_{t_2}(\mathbb{I}_2\otimes\pi_1\otimes\mathbb{I})(\pi_0\otimes\mathbb{I}_3 + \pi_1\otimes\sigma_x \otimes\left(\sigma_x + \sigma_z\right)\otimes\mathbb{I})(\pi_1\otimes\mathbb{I}_{3})\ket*{\text{init}}_\mathfrak{ralg}=\\
=&\sqrt{1/24}\bra*{\text{ok}}\bra*{\text{ok}}U_{t_2}(\pi_1\otimes\sigma_x \otimes\pi_1\left(\sigma_x + \sigma_z\right)\otimes\mathbb{I})\ket*{\text{init}}_\mathfrak{ralg}=\\
=&\sqrt{1/24}\bra*{\text{ok}}\bra*{\text{ok}}U_{t_2}(\pi_1\otimes\sigma_x \otimes(\dyad*{1}{0}-\pi_1)\otimes\mathbb{I}\ket*{\text{init}}_\mathfrak{ralg}=\\
=&\sqrt{1/24}\bra*{\text{ok}}\bra*{\text{ok}}\pi_1\otimes\sigma_x \otimes(\pi_0\otimes\mathbb{I} + \pi_1\otimes\sigma_x)(\dyad*{1}{0}\otimes\mathbb{I}-\pi_1\otimes\mathbb{I})\ket*{\text{init}}_\mathfrak{ralg}\\
=&\sqrt{1/24}\bra*{\text{ok}}\bra*{\text{ok}}\pi_1\otimes\sigma_x \otimes(\dyad*{1}{0}-\pi_1)\otimes\sigma_x)\ket*{\text{init}}_\mathfrak{ralg}.
\end{align*} 
Consider the two operators $A=\pi_1\otimes\sigma_x \otimes\dyad*{1}{0}\otimes\sigma_x$ and $B=-\pi_1\otimes\sigma_x\otimes\pi_1\otimes\sigma_x$ on $\ket*{\text{init}}_\mathfrak{ralg}=\sqrt{1/3}\ket*{0}\ket*{0} \ket*{0}\ket*{0} +\sqrt{2/3}\ket*{1}\ket*{0} \ket*{0}\ket*{0}$ individually. Due to $\pi_1$'s appearance in places other than the first, $\ket*{\text{init}}_\mathfrak{ralg}$ is an eigenvector of  $B$  with eigenvalue 0. $A$ eliminates the first term but flips all 0s in the second term of $\ket*{\text{init}}_\mathfrak{ralg}$ to 1s. Hence we get  
\begin{align*}
&{}_\mathfrak{ralg}\!\bra*{\Psi}(\pi_\text{ok}\otimes\pi_\text{ok})U_{t_2}(\mathbb{I}_2\otimes\pi_1\otimes\mathbb{I})U_{t'}U_{t_1}(\pi_1\otimes\mathbb{I}_{3})\ket*{\text{init}}_\mathfrak{ralg} = \\
=& (1/6)(\bra*{\text{ok}}\bra*{\text{ok}})(\ket*{1}\ket*{1}\ket*{1}\ket*{1}) = 1/12.
\end{align*} 
Note that this implies $\Pr(\text{ok}_{\mathfrak{c}}^{t_3}\wedge \text{ok}_{\mathfrak{d}}^{t_4}|1_{\mathfrak{a}}^{t_1}\wedge 1_{\mathfrak{g}}^{t_2})=1$.
\hfill $\blacksquare$
\end{A}
\begin{A}\label{thm:AmandGust}
We show that ${}_\mathfrak{ralg}\!\bra*{\text{init}}\Pi_{\neg(1_\mathfrak{g}\wedge 0_\mathfrak{a})}\ket*{\text{init}}_\mathfrak{ralg} =1$. To do so, we only need to show that ${}_\mathfrak{ralg}\!\bra*{\text{init}}\Pi^{t_2}_1\Pi_0^{t_1}\ket*{\text{init}}_\mathfrak{ralg} =0$, where
\begin{equation*}
\Pi_0^{t_1} := \pi_0\otimes\mathbb{I}_{3},
\end{equation*}
and $\Pi^{t_2}_1$ is defined as in \hyperref[thm:JointOcc]{A 2}. 

Inserting these, we get
\begin{equation*}
{}_\mathfrak{ralg}\!\bra*{\text{init}}\Pi^{t_2}_1\Pi_0^{t_1}\ket*{\text{init}}_\mathfrak{ralg} ={}_\mathfrak{ralg}\!\bra*{\text{init}}U_{t_1}U_{t'}(\mathbb{I}_2\otimes\pi_1\otimes\mathbb{I})U_{t'}U_{t_1}(\pi_0\otimes\mathbb{I}_{3})\ket*{\text{init}}_\mathfrak{ralg}.
\end{equation*}
It is easy to check that $U_{t_1}U_{t'} = U_{t'}U_{t_1} = (\pi_0\otimes\mathbb{I}_3 + \pi_1\otimes\sigma_x \otimes\left(\sigma_x + \sigma_z\right)\otimes\mathbb{I})$. Inserting this, we get
\begin{align*}
&{}_\mathfrak{ralg}\!\bra*{\text{init}}\Pi^{t_2}_1\Pi_0^{t_1}\ket*{\text{init}}_\mathfrak{ralg} =\\
=&{}_\mathfrak{ralg}\!\bra*{\text{init}}(\pi_0\otimes\mathbb{I}\otimes\pi_1\otimes\mathbb{I} + \pi_1\otimes\sigma_x \otimes\left(\dyad*{0}{1}-\dyad*{1}{1} \right)\otimes\mathbb{I})(\pi_0\otimes\mathbb{I}_3)\ket*{\text{init}}_\mathfrak{ralg}/2=\\
=&{}_\mathfrak{ralg}\!\bra*{\text{init}}\pi_0\otimes\mathbb{I}\otimes\pi_1\otimes\mathbb{I}\ket*{\text{init}}_\mathfrak{ralg}/2.
\end{align*}
Now $\ket*{\text{init}}_\mathfrak{ralg}=\sqrt{1/3}\ket*{0}\ket*{0}\ket*{0}\ket*{0}+\sqrt{2/3}\ket*{1}\ket*{0}\ket*{0}\ket*{0})$, so  $\ket*{\text{init}}_\mathfrak{ralg}$ is an eigenvector of this operator with eigenvalue zero.
\hfill $\blacksquare$
\end{A}
\begin{A}\label{thm:Amand}
We show that ${}_\mathfrak{l}\!\bra*{+} {}_\mathfrak{g}\!\bra*{0}\pi_\text{fail}^{t_4}\ket*{+}_\mathfrak{l}\ket*{0}_\mathfrak{g}={}_\mathfrak{l}\!\bra*{+} {}_\mathfrak{g}\!\bra*{0}\pi_{\neg\text{ok}}^{t_4}\ket*{+}_\mathfrak{l}\ket*{0}_\mathfrak{g}=1$, where $\pi_\text{fail}^{t_4}=U^{\dagger}_\mathfrak{a}\pi_\text{fail}U_\mathfrak{a}$, $\pi_{\neg\text{ok}}^{t_4}=U^{\dagger}_\mathfrak{a}(\mathbb{I}_{2}-\pi_\text{ok})U_\mathfrak{a}$, and we recall that $U_\mathfrak{a}= \pi_0\otimes\mathbb{I} + \pi_1\otimes\sigma_x$ is the $\mathfrak{lg}$-part of $U_{t_2}$, postselected for $1_\mathfrak{a}^{t_1}$. Using the Schr\"odinger picture, it is easy to verify that 
\begin{equation*}
U_\mathfrak{a}\ket*{+}_\mathfrak{l}\ket*{0}_\mathfrak{g}=\ket*{\text{fail}}_\mathfrak{lg},
\end{equation*}
so it is immediate that 
\begin{equation}
_\mathfrak{l}\!\bra*{+} {}_\mathfrak{g}\!\bra*{0}\pi_\text{fail}^{t_4}\ket*{+}_\mathfrak{l}\ket*{0}_\mathfrak{g}={}_\mathfrak{lg}\!\bra*{\text{fail}}\pi_\text{fail}\ket*{\text{fail}}_\mathfrak{lg}=1.
\end{equation}
Moreover, 
\begin{equation*}
\braket*{\text{ok}}{\text{fail}}=(\bra*{0}\bra*{0}-\bra*{1}\bra*{1})(\ket*{0}\ket*{0}+\ket*{1}\ket*{1})/2=(1-1)/2=0,
\end{equation*}
which establishes 
\begin{equation*}
{}_\mathfrak{l}\!\bra*{+} {}_\mathfrak{g}\!\bra*{0}\pi_{\neg\text{ok}}^{t_4}\ket*{+}_\mathfrak{l}\ket*{0}_\mathfrak{g}=1 - {}_\mathfrak{lg}\!\bra*{\text{fail}}\pi_\text{ok}\ket*{\text{fail}}_\mathfrak{lg}=1-0=1.
\end{equation*}

\hfill $\blacksquare$
\end{A}
\begin{acknowledgements}I have profited from discussions with Chris Fuchs and David Glick at the Stellenbosch Institute for Advanced Study, and I thank Chris for the invitation and David for pointing him to my work. Otherwise, I thank Markus Schrenk, Dennis Lehmkuhl, and Cristin Chall for helpful discussion and comments. \end{acknowledgements}
\bibliographystyle{abbrv}
\bibliography{qcert}
\end{document}